\newcolumntype{P}[1]{>{\centering\arraybackslash}p{#1}}
\newcolumntype{M}[1]{>{\centering\arraybackslash}m{#1}}
\newtheorem*{rep@theorem}{\rep@title}
\newcommand{\newreptheorem}[2]{%
\newenvironment{rep#1}[1]{%
 \def\rep@title{#2 \ref{##1}}%
 \begin{rep@theorem}}%
 {\end{rep@theorem}}}
\newtheorem{definition}{Definition}[section]
\newtheorem{lemma}{Lemma}[section]
\newtheorem{theorem}{Theorem}[section]
\newtheorem{corollary}{Corollary}[section]
\newtheorem{assumption}{Assumption}[section]
\theoremstyle{definition}
\newtheorem{remark}{Remark}
\newcommand{\mP}{\mathcal{P}}
\newcommand{\mbS}{\mathbb{S}}
\newcommand{\mbZ}{\mathbb{Z}}
\newcommand{\adj}{\overset{\scalebox{0.5}{adj}}{=} }
\title{On the completeness of contraction map proof method for holographic entropy inequalities}
\author[a,b]{Ning Bao}
\author[a]{Keiichiro Furuya}
\author[a,c]{Joydeep Naskar}
\affiliation[a]{Department of Physics, Northeastern University, Boston, MA, 02115, USA}
\affiliation[b]{Computational Science Initiative, Brookhaven National Laboratory, Upton, NY 11973 USA}
\affiliation[c]{The NSF AI Institute for Artificial Intelligence and Fundamental Interactions, Cambridge, MA, U.S.A.}
\emailAdd{ningbao75@gmail.com}
\emailAdd{k.furuya@northeastern.edu}
\emailAdd{naskar.j@northeastern.edu}
\abstract{The contraction map proof method is the commonly used method to prove holographic entropy inequalities. Existence of a contraction map corresponding to a holographic entropy inequality is a sufficient condition for its validity. But is it also necessary? In this note, we answer that question in affirmative for all linear holographic entropy inequalities with rational coefficients. We show that the pre-image of a non-contraction map is not a hypercube, but a proper cubical subgraph, and show that this manifests as alterations to the geodesic structure in the bulk, which leads to the violation of inequalities by holographic geometries obeying the RT formula.}
\gdef\@fpheader{}
\begin{document}

\maketitle

\section{Introduction}
Quantum entanglement is a ubiquitous phenomenon of nature appearing in many different contexts of quantum physics, and quantum gravity is no exception. The $AdS/CFT$ correspondence \cite{Maldacena1997} has proven to be a hugely successful tool to understand various aspects of quantum gravity, at least perturbatively. By virtue of this correspondence, the structure of quantum entanglement in a boundary CFT mysteriously encodes the information about the geometry of the dual bulk theory of gravity. A part of that mystery has been unfolded by the Ryu-Takayanagi (RT) formula\cite{Ryu-2006-RTformula}, which relates the entanglement entropy of a subregion $R$ of the boundary CFT to the minimal surface $\gamma_R$ in the bulk homologous to $R$. More precisely,
\begin{equation}\label{eq:RT-formula}
        S(R)=\frac{\text{area}(\gamma_R)}{4G_N},
\end{equation}
where $G_N$ is the Newton's constant. This formula holds up to leading order in $1/G_N$, and for $O(G_N^0)$ corrections, one analogously defines the quantum extremal surface \cite{Engelhardt:2014gca,Akers:2021fut}. However, for the purpose of this paper, we will strictly restrict ourselves to the leading order contribution, i.e., the validity of our statements hold upto $O(1/G_N)$.

It is, however, important to clarify that the validity of the RT formula assumes the existence of a semi-classical bulk dual, and not every quantum state on the boundary CFT admits one. We will therefore only talk about those states that admit a semiclassical dual, and are known as \emph{holographic states}. But can we say anything about which states are holographic? It is generally not known, to the best of our knowledge, what are the sufficient conditions for the existence of a bulk dual, however, we do know some necessary ones. A natural question to ask is that since boundary entanglement is so intricately related to the bulk geometry, what can we learn about the bulk geometry by studying entanglement entropy on the boundary CFT? The authors in \cite{Bao:2015bfa} were the first to ask this question, which led to the formulation of the holographic entropy cone (HEC). A series of interesting works followed this development, see references \cite{He:2023aif,He:2019ttu,Hernandez-Cuenca:2022pst,Fadel:2021urx,Bao:2021gzu,Czech:2021rxe,He:2023rox,Bousso:2025mdp,Czech:2025jnw,Bao-2024-towardscompleteness} for a partial list\footnote{We apologize for providing partial list, that excludes many other interesting developments in this field.}.

The HEC is a rational, polyhedral, convex cone defined by its facets, that are given by a finite set of independent, linear inequalities involving subregion entropies. These inequalities, also known as \emph{facet inequalities} bound the entropic phase space of holographic states. In other words, these inequalities are the independent set of tight holographic entropy inequalities (HEI) that cannot be improved further. The holographic entropy cone is completely known for up to five parties \cite{Bao:2015bfa,HernandezCuenca:2019wgh}, and partially known for six parties \cite{Hernandez-Cuenca:2023iqh}, and barring the exception of two infinite families\cite{Bao:2015bfa,Czech:2023xed,Czech:2024rco,Bao:2024vmy}, largely unknown for higher parties.

The standard method to prove an HEI is the contraction map proof method (which we will review in section \ref{sec:review-cmap}). It is proved in \cite{Bao:2015bfa} that if a contraction map exists for a candidate conjecture inequality, then it is a valid HEI. But it remained an open question if the validity of an HEI implies the existence of a contraction map. In a previous work \cite{Bao-2024-properties}, we discovered several properties of the contraction map. These properties shed light on the rigidity of these maps as they admit several rules to deterministically fill up the entries of the contraction map, and at the same time incorporate the flexibility of these maps in terms of free choices. These properties strongly suggest that the entries of the contraction map is not just some constraint satisfaction tool, but are closely related to the space-like geodesic structure on the bulk slice. Contraction map can be thought of as an encapsulation of the inclusion/exclusion principle to encode bulk subregions. These ideas motivated us further investigate the contraction maps from the perspective of graph theory, which led to the discovery of the equivalence of contraction maps to graph contraction maps between hypercubes and partial cubes \cite{Bao-2024-towardscompleteness}. Exploiting this equivalence, we suggested an algorithmic procedure (albeit, a very slow one) to find all HEIs that have corresponding contraction maps\footnote{In a future work, we will report findings on some new HEIs discovered from partial cubes\cite{BFN:preparation}.}.

All known facet HEIs discovered so far admit a contraction map, and there are no known HEI that does not have a contraction map. This is a strong evidence in favour of the completeness of the contraction map proof method. Moreover, the deterministic fixing of the entries of contraction maps and their geometric interpretation via inclusion/exclusion makes the case of completeness stronger, as false inequalities often demand bulk subregions to be included and excluded simultaneously, leading to the failure of contraction condition. Nevertheless, a formal proof of completeness has been elusive. In this work, we shed light on this topic and give a proof of completeness, upto certain disclaimers. We will clarify them one by one,
\begin{itemize}
    \item Our proof of completeness of contraction map proof method rely on the existence of a graph model corresponding to a smooth bulk geometry (and vice versa), which is achieved by a discrete partitioning of the bulk manifold by RT surfaces homologous to boundary subregions. In such models, the discrete entropy is computed on a graph by virtue of edge weights and a min-cut procedure ensures that it equals to the entanglement entropy of the boundary subregion corresponding to the cut. We expect that any sensible holographic geometry admits such a graph realization defined by entanglement entropies given by the RT formula and we are not aware of any counter-example.\footnote{On the other hand, we have examples of non-holographic theories, where the entanglement entropy follows an area law to obey HEIs \cite{Bao:2015boa,Naskar:2024mzi}.}

    \item Our work is closely derived following ideas of the holographic entropy cone. Therefore, the contraction map completion is proved for linear inequalities with rational coefficients realizable within the framework of HEC, applicable to both facet and non-facet HEIs. We do not rule out the possibility of non-linear inequalities.

    \item Our work borrows some of the lessons learned from the existing HEIs, their contraction maps and associated graph structures. We develop a mathematical machinery to define our physical insights. We take some of those insights, for instance, certain adjacency conditions, as physical requirements to construct our proof. We justify their existence and provide supporting evidence for the same.

\end{itemize}

\subsection*{Main Result and Organization}
The main result of this paper is the following theorem:
\begin{reptheorem}{thm:completeness}
    Given a candidate entropy inequality,
    \begin{equation}
        \sum_{i=1}^l \alpha_i S_{P_i} \geq \sum_{j=1}^r \beta_j S_{Q_j}, \; \alpha_i,\beta_j \in \mbZ_{+},
    \end{equation}
    it is a valid HEI if and only if a contraction map $f:\{0,1\}^M\to \{0,1\}^N$ exists (where $M=\sum_i \alpha_i$ and $N=\sum_j \beta_j$) satisfying homology conditions on the boundary.
\end{reptheorem}

The idea of the proof goes as follows: we assume on the contrary that there exists a true HEI that does not have a contraction map. We show that it is possible to always find a geometry that violates the HEI. A non-contractive map breaks the adjacency between distances on graph and Hamming distance of bitstrings. Geometrically, this manifests as change in the geodesic structures in the bulk, thwarting the bulk smoothness. We show that such contradiction arises due to our incorrect assumption about the non-existence of a contraction map corresponding to the HEI. In fact, we will show that non-contraction maps are always associated with alterations of the geodesic structure of the bulk manifold, such that there always exists a holographic geometry for which the inequality is violated.

The organization of this paper is as follows. In section \ref{sec:review-cmap}, we review the relation between geometry and graph, followed by the contraction map proof method. In section \ref{sec:examples}, we give examples of false inequalities and their failure to form contraction maps and go on discussing the implications of non-contraction maps on the bulk geometry. Moreover, we present an example of how a non-contraction map fails a valid HEI, such as MMI. In this section, we build the physical intuitions towards the proof of completeness. In section \ref{sec:proof}, we develop the mathematical machinery and write a formal proof of completeness. Lastly, we discuss the implications of our results and potential applications in section \ref{sec:discussion}.

\section{Review of the Contraction Map Proof Method}
\label{sec:review-cmap}

A $n$-party holographic entropy inequality (HEI) involving $n$ disjoint regions (and a purifier $O$) $[n+1]:=\{A_1, \cdots, A_n, A_{n+1}=O\}$ can be written as
\begin{equation}
    \sum_{i=1}^l \alpha_i S_{P_i} \geq \sum_{j=1}^r \beta_j S_{Q_j},
\end{equation}
where $\alpha_i$ and $\beta_j$ for $i=1,\cdots, l$ and $j=1,\cdots, r$ are positive integers $\mbZ_{+}$. $P_i$ and $Q_j$ are polychromatic subregions, which are the elements of the proper power set $P(\{A_1, \cdots, A_n\}) \backslash \emptyset$ containing $2^n-1$ possible elements. We can expand the inequality to have a unit coefficient for each term, i.e.,
\begin{equation}\label{eq:genentineq-expand}
    \sum_{u=1}^M  S_{L_u} \geq \sum_{v=1}^N  S_{R_v},
\end{equation}
where $L_u,R_v\in P(\{A_1, \cdots, A_n\}) \backslash \emptyset$ for $\forall u =1,\cdots, M$ and $\forall v =1,\cdots, N$, such that
\begin{equation}
    \sum_{i=1}^l \alpha_i = M, \; \sum_{j=1}^r \beta_j = N.
\end{equation}

\subsection{Graphs from holographic geometries and holographic geometries from graphs}

\begin{figure}[t]
    \centering
    \begin{subfigure}{0.26\textwidth}
        \centering
        \includegraphics[width=\linewidth]{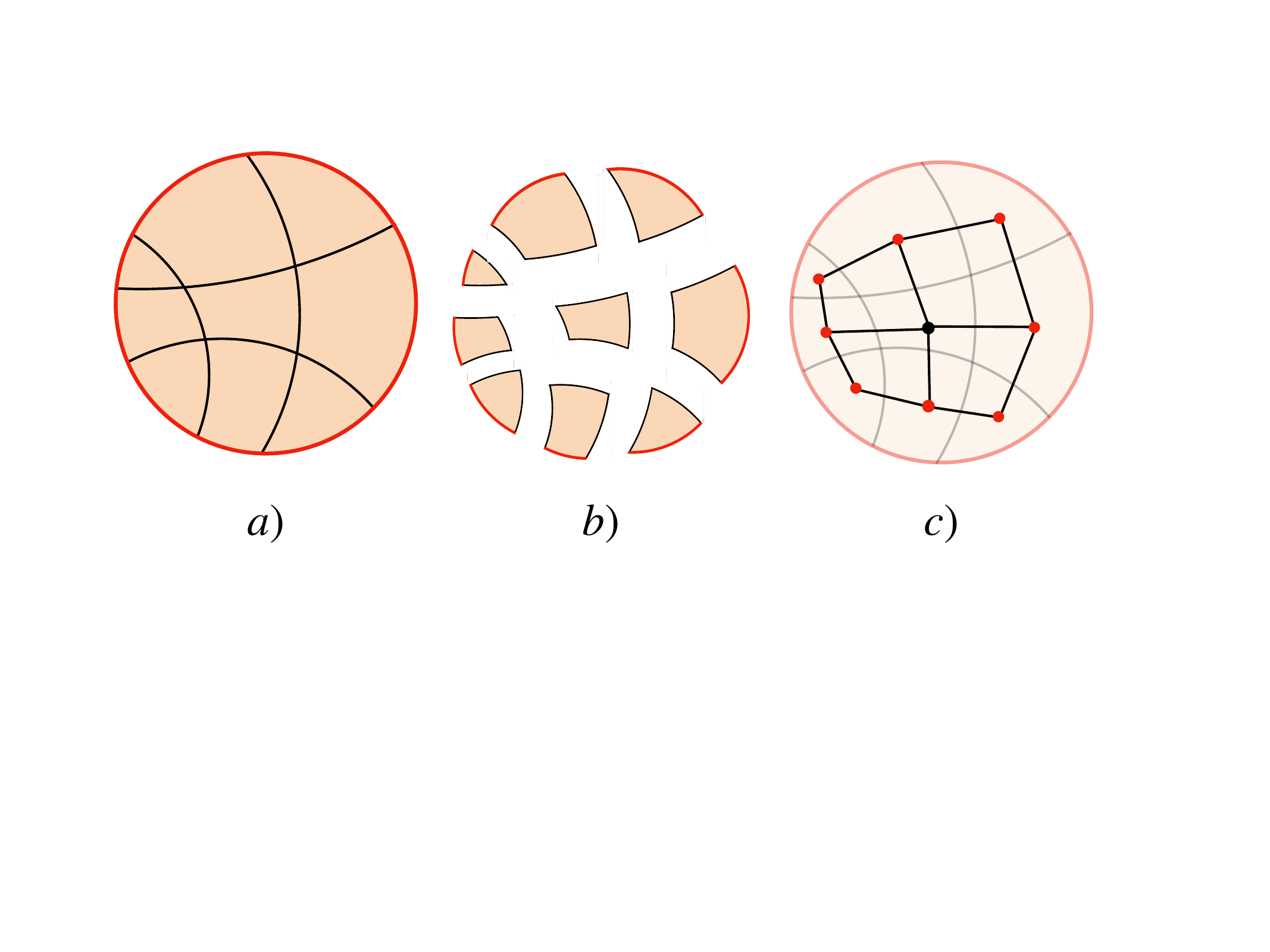}
        \caption{}
        \label{fig:RTarrangement}
    \end{subfigure}\hfill
    \begin{subfigure}{0.26\textwidth}
        \centering
        \includegraphics[width=\linewidth]{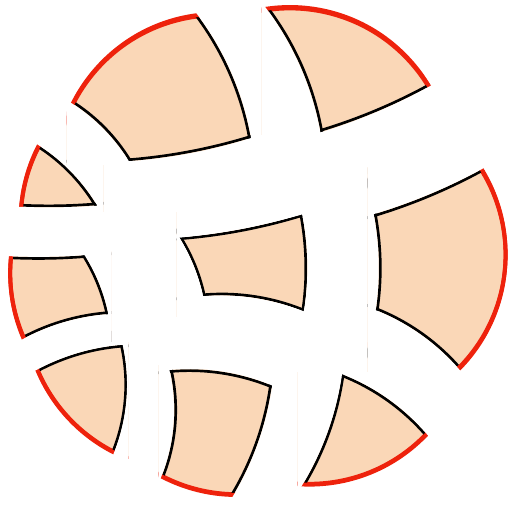}
        \caption{}
        \label{fig:subregions}
    \end{subfigure}\hfill
    \begin{subfigure}{0.26\textwidth}
        \centering
        \includegraphics[width=\linewidth]{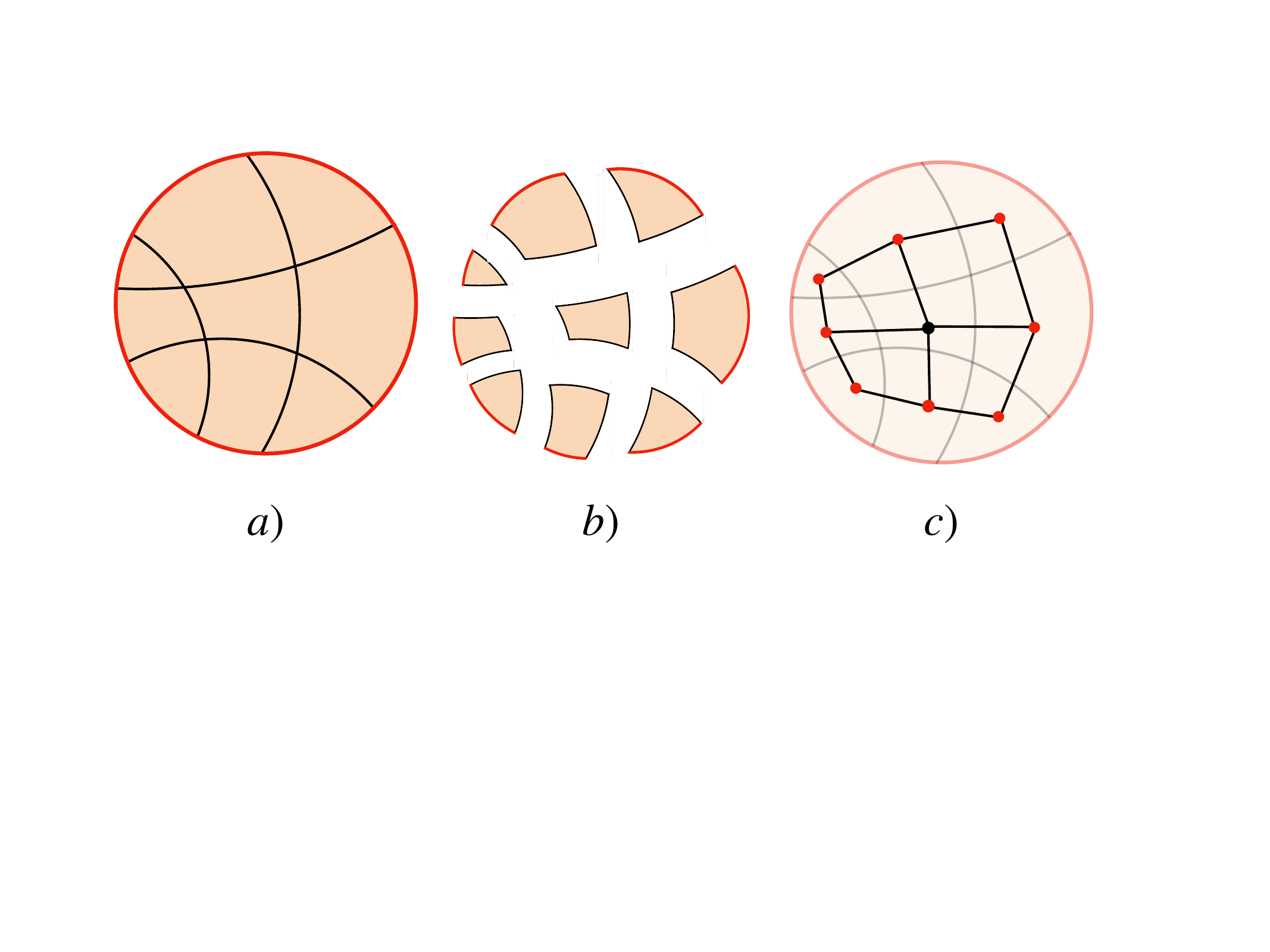}
        \caption{}
        \label{fig:graph}
    \end{subfigure}
    \caption{\small{(a) A holographic geometry with the RT arrangement: Four RT surfaces denoted by black lines are arranged on a constant time slice of AdS$_3$/CFT$_2$. The boundary is colored red. (b) The RT arrangement partitions the time slice into bulk subregions. (c) The partial cube associated with the RT arrangement. The boundary vertices are shown in red. The vertex in the middle, shown in black, is a bulk vertex. Its isometric dimension is $4$, which matches the total number of the RT surfaces. }}
    \label{fig:parititions0}
\end{figure}

For a given set of RT surfaces in a constant time slice of AdS$_{D+1}$/CFT$_{D}$, we have an RT arrangement, defined below, that partitions the bulk manifold into a set of subregions in the bulk, see figure \ref{fig:parititions0}.
\begin{definition}[RT arrangements in AdS$_{D+1}$/CFT$_D$]
    Given a set $\{S_{L_u}| u =1,\cdots,J \}$ of distinct boundary entanglement entropies of $J$ polychromatic subregions $\{L_u \in [n]\}_{u=1}^J$ in AdS$_{D+1}$/CFT$_D$ for some $D\in \mbZ_{\geq 2}$, we have a set of distinct $(D-1)$-dimensional RT hypersurfaces 
    \begin{equation}
        \Gamma_J = \{\gamma_{L_u}| u=1,\cdots,J\}
    \end{equation}
    using the RT formula. Then, the arrangement of RT hypersurfaces, or an RT arrangement, is the set $\Gamma_J$ in a constant time slice of AdS$_{D+1}$. 
\end{definition}

\begin{remark} 
When studying a HEI, we consider two sets of RT arrangements, each of which is associated with either the LHS or the RHS of the inequality. In the later section, we consider the inequality with a unit coefficient by the expansion as in \eqref{eq:genentineq-expand}. For the construction of the corresponding RT arrangements, we distinguish the repetitive terms within each side as the entanglement entropies of distinct subsystems. For example, suppose the LHS of a HEI contains $2 S_{AB}$, which is $S_{AB}+S_{AB}$ by the expansion. We make a distinction between them by making one of them into the entanglement entropy $S_{\widetilde{AB}}$ of the boundary subsystem $\widetilde{AB}$ so that the sum becomes $S_{\widetilde {AB}}+S_{AB}$. Geometrically, one can, for instance, make the boundary subsystem $\widetilde {AB}$ infinitesimally larger than $AB$. In this way, we can avoid a vertex residing in a zero-volume bulk subregion when constructing the RT-region graph defined below from an RT arrangement. Alternatively, one might introduce more number of parties to break the degeneracy of terms such that the inequality has unique terms with unit coefficient. For a discussion along this line, see Corollary 3.1 of \cite{Bao-2024-towardscompleteness}.
\end{remark}

We can construct a graph from the RT arrangement as follows. 
\begin{definition}[RT-region Graph\cite{Bao:2015bfa}]\label{def:RTtoPC}
    Given an RT arrangement $\Gamma_{J}$ in AdS$_{D+1}$/CFT$_D$. The RT-region graph  $G_{J}=(V_{J},E_{J})$ has the following elements.
    \begin{itemize}
        \item Set $V_{J}$ of vertices: Assign a vertex to every bulk subregion
            \begin{itemize}
                \item Boundary vertices $\{v_{A_k}\}\subseteq V_{J}$: A vertex assigned to the bulk subregion ``$a_k$'' homologous to the boundary region $A_k\in[n]$.
                \item Bulk vertices $\{v_b\}\subset V_{J}$; the vertices that are not boundary vertices.
            \end{itemize}
        \item  Set $E_{J}$ of edges: Assign an edge between vertices if the geodesics between them crosses an RT surface once. 
        \begin{itemize}
            \item The edge weights are proportional to the portion of the area of RT hypersurfaces where the edges cross.
        \end{itemize}
    \end{itemize}
    
\end{definition}
By the definition above, the edges of the graph represent the adjacency between the bulk subregions by construction. It can be shown that the underlying unit-weighted graph of the RT-region graph is a special type of graph known as a \textit{partial cube}. We summarize the statement in lemma \ref{lem:RTtoPC}\footnote{The proof can be done by applying the proof for theorem $7.17$ in \cite{Ovchinnikov} to the case of a Riemann manifold of constant negative curvature.}. 

\begin{definition}[Partial cube\cite{Ovchinnikov,WINKLER1984221}]\label{def:partialcube}
    A graph $G=(V,E)$ is a partial cube if $G$ is isometrically embeddable to a $J$-dimensional hypercube graph $H_J=(V_{H_J},E_{H_J})$. That is, there exists an isometry $\phi: V \to V_J$ such that
    \begin{equation}\label{eq:isometric-condition}
            d_G (w,w')=  d_G(\varphi(v),\varphi(v'))
    \end{equation}
    for all $w,w'\in V_J$ and $v,v'\in V_J$.
\end{definition}
\begin{definition}[Isometric dimension \cite{Ovchinnikov}]
    The isometric dimension $idim(G)$ of a partial cube $G$ is the minimum dimension of a hypercube in which $G$ is isometrically embeddable.
\end{definition}
\begin{lemma}\label{lem:RTtoPC}
    Given an RT arrangement $\Gamma_{J}$, the underlying unit-weighted graph of the RT-region graph $G_J$ is a partial cube whose isometric dimension is $idim(G_J)=J$.
\end{lemma}

\begin{remark}
In this work, we mainly use two distance measures, a graph distance and a Hamming distance. They are denoted as $d_G$ and $d_H$, respectively. In this paper, both unit-weighted graphs and weighted graphs appear. When the graph distance is applied to a weighted graph, we measure the distance of its underlying unit-weighted graph, unless explicitly stated otherwise.
\end{remark}

The fact that the underlying unit-weighted graph of an RT-region graph is a partial cube has benefits in studying the HEIs. For example, the isometric embeddability condition reduces the complexity of the algorithm to determine whether a graph is a partial cube or not. It enables the construction of a polynomial-time algorithm to generate the holographic entropy inequalities in \cite{Bao-2024-towardscompleteness}. In this section, we study another characterization of partial cubes, specifically the equivalence class of edges of a partial cube. These equivalence classes have a geometric interpretation. As a result, it simplifies the entropy measure associated with an RT-region graph. We use this simplification in the following sections. 

The edges of a partial cube can be classified into the equivalence classes by the equivalence relation introduced in \cite{WINKLER1984221}. 
\begin{definition}[Winkler (equivalence) relation and its equivalence classes\cite{Ovchinnikov,ovchinnikov2008partial,WINKLER1984221}\footnote{The Winkler relation is a generalization of \textit{Djokovi\'c relation} introduced  in \cite{Djokovi1973Distancepreservings}. When a graph is bipartite, both the Djokovi\'c relation and the Winkler relation coincide. What we call the Winkler relation can also be referred to as the Djoković-Winkler relation. }]\label{def:winkler-relation-class}
    Consider a connected graph $G=(V,E)$. An edge $e=(v_1,v_2)\in E$ for $v_1,v_2\in V$ is in Winkler relation $\Theta$ to another edge $e'=(v'_1,v'_2)\in E$ for $v'_1,v'_2\in V$, which is denoted by $e\Theta e'$, if 
    \begin{equation}\label{eq:parallel-edges}
        d_G(v_1,v_1')+ d_G(v_2,v'_2) \neq d_G(v_1,v_2')+ d_G(v_2,v'_1).
    \end{equation}
    It is an equivalence relation $\Theta$ between edges $e,e'\in E$ if the Winkler relation $\Theta$ is 
    \begin{enumerate}
        \item reflexive, i.e., $e\Theta e$ for $\forall e\in E$, and
        \item symmetric, i.e., $e_1\Theta e_2$ implies $e_2\Theta e_1$ for $\forall e_1,e_2 \in E$, and
        \item transitive, i.e., $e_1\Theta e_2$ and $e_2\Theta e_3$ imply $e_1\Theta e_3$ for $\forall e_1,e_2,e_3\in E$.
    \end{enumerate}
    We denote by $\Theta[e]:=\{e'| e\Theta e', \;\forall e'\in E \}$ the equivalence class of $\Theta$ whose representative edge of the class is $e\in E$.    
\end{definition}
Simply, the edges $e,e'\in E$ in the relation $\Theta$ cannot be a part of the same shortest path in the graph. As discussed in \cite{Ovchinnikov}, consider, for example, a shortest path in a connected graph $G=(V,E)$, and edges $e_1=(v_1,v_2)\in E$ and $e_2=(v_1',v_2')\in E$. If the edges are part of the shortest path and $d_G(v_1,v_2')$ is the largest distance among all possible graph distances between the four vertices, we have
\begin{equation}
    d_G(v_1,v_2') = d_G(v_1,v_1')+1,\;d_G(v_2,v_2') = d_G(v_2,v_1')+1.
\end{equation}
This implies that 
\begin{equation}\label{eq:shortest-path}
     d_G(v_1,v_1')+ d_G(v_2,v_2') = d_G(v_1,v_2') + d_G(v_2,v_1').
\end{equation}

It turns out that the Winkler relation is an equivalence relation on the edge set of any partial cube.
\begin{lemma}[\cite{WINKLER1984221,ovchinnikov2008partial,Ovchinnikov}]\label{lem:partialcube-equivalence re}
    If a connected graph $G=(V,E)$ is a partial cube, then the Winkler relation $\Theta$ is an equivalence relation on $E$. 
\end{lemma}

To understand the relation between the equivalence classes and the RT hypersurfaces, we first introduce \textit{semicubes}.
\begin{definition}[Semicubes\cite{ovchinnikov2008partial}]
    Let $G=(V,E)$ be a connected graph. For any edge $e=(v_1,v_2)\in E$ for $v_1,v_2\in V$, two sets of vertices, $V_e$ and $\overline{V}_e$ are defined as
    \begin{equation}
    \begin{split}
        V_e: = \{v\in V| d_G(v,v_1) <d_G(v,v_2)  \} \\
        \overline{V}_e: = \{v\in V| d_G(v,v_1) >d_G(v,v_2)  \}. \\
    \end{split}
    \end{equation}
    The semicube $G_e$ is the isometric subgraph of $G$ induced by $V_e$. Similarly, the semicube $\overline{G}_e$ is defined accordingly.

\end{definition}
In general, the union of $V_e$ and $\overline{V}_{e}$ does not necessarily add up to $V$. However, if a graph is a partial cube, $V_e \cup \overline{V}_{e} = V$. In addition, the semicubes for any edges in the same equivalence class are equivalent.

\begin{lemma}[\cite{ovchinnikov2008partial}]\label{lem:theta-vertex-partition}
    Let $G=(V,E)$ be a partial cube. Then, 
    \begin{equation}
        e\Theta e', \;e,e'\in E
    \end{equation}
    if and only if the vertex set $V$ is partitioned by $V_e$ and $\overline{V}_e$ or by $V_e'$ and $\overline{V}_e'$ such that
    \begin{equation}\label{eq:vertex-partition}
        V_e= V_{e'}, \text{ and }  \overline{V}_{e} =\overline{V}_{e'},
    \end{equation}
    and 
    \begin{equation}
        G_e = G_{e'}\text{ and }\overline{G}_e = \overline{G}_{e'}
    \end{equation}
\end{lemma}
Lemma \ref{lem:theta-vertex-partition} implies that there is a corresponding RT hypersurface in $\Gamma_J$ for each equivalence class $\Theta$ in $E_J$. This is simply because the bulk subregion on one side of the hypersurface $\gamma$ crossed by an edge $e\in \Theta[e]$ should reside on the same side of the hypersurface $\gamma'$ crossed by any edge $e'\in \Theta[e]$, resulting in $\gamma=\gamma'$.

Hence, for the RT-region graph $G_{J}=(V_{J},E_{J})$ of a given RT arrangement $\Gamma_{J}$, each RT surface $\gamma_{L_u}$ crosses all the edges in one of the equivalence classes $\Theta[e]$ of the RT-region graph. From now on, we label the equivalence class corresponding to the RT surface $\gamma_{L_u}$ as $\Theta_{L_u}[e]$. In addition, we denote the cut due to the equivalence class $\Theta_{L_u}[e]$ by 
\begin{equation}
    V_{J} = V_{L_u}\cup \overline{V}_{L_u}
\end{equation}
instead of $V_{J} = V_e\cup \overline{V}_e$. The vertex subset $V_{L_u}$ contains the boundary vertices $v_{A_k}$ when the boundary subregion $A_k$ is the part of the polychromatic region $L_u$.

The number $|\Gamma_{J}|=J$ of distinct RT surfaces in the RT arrangement is equal to the number $n_\Theta$ of distinct equivalence classes of the RT-region graph. Using the fact\footnote{See theorem 5.34 in \cite{Ovchinnikov}.} that $idim(G_L)=n_\Theta$\cite{Ovchinnikov}, we can see that
\begin{equation}
    |\Gamma_{J}|=idim(G_L) = n_\Theta .
\end{equation}

In general, an entropy measure associated with a weighted graph $G=(V,E)$ can be a discrete entropy\cite{Bao:2015bfa} defined as
\begin{equation}\label{eq:discrete-entropy}
    S^*_l := \underset{W\cup W^c=V}{min} \sum_{(v,v')\in E}|(v,v')|, \; v\in W, \; v'\in W^c
\end{equation}
where its optimization is over the cut $W\cup W^c=V$ of the vertex set $V$. Here, $W \subset V$ is a subset of $V$ and $W^c = V\setminus W$ is a complement set of $W$ in $V$.

It is useful to note the following. First, $V_{J}=V_{L_u}\cup \overline{V}_{L_u}$ for $\Theta_{L_u}[e]$ of an RT-region graph $G_{J}$ defines the cut that gives the minimum of the discrete entropy $S^*_{L_u}$ by construction. Second, the edge weights of all the edges $e$ in the equivalence class $\Theta_{L_u}[e]$ are proportional to the respective portions of area of RT hypersurface $\gamma_{L_u}$. Therefore, we can rewrite the discrete entropy as follows.
\begin{corollary}[]
    Let $G_{J}=(V_{J},E_{J})$ be the RT-region graph of an RT arrangement $\Gamma_{J}$. For $L_u \in [n]$, 
    \begin{equation}\label{eq:discrete-entropy-edges}
        S^*_{L_u} = \sum_{e\in \Theta_{L_u}[e]}|e|
    \end{equation}
    where $|e|$ are the edge weights of $e\in \Theta_{L_u}[e]$.
\end{corollary}

In short, we constructed a pair $(\{S^*_{L_u}\}_{u=1}^J,G_J)$ of the set of discrete entropies and the partial cube from a pair $(\{S_{L_u}\}_{u=1}^J,\Gamma_J)$ of the set of holographic entropies and the set of RT hypersurfaces, such that $S_{L_u}=S^*_{L_u}$ for $\forall u=1,\cdots,J$. Conversely, can we construct a pair $(\{S_{L_u}\}_{u=1}^J,\Gamma_J)$ from $(\{S^*_{L_u}\}_{u=1}^J,G_J)$ such that $S_{L_u}=S^*_{L_u}$ for $\forall u=1,\cdots,J$? The answer is affirmative by construction\footnote{It should be noted that the construction of an RT-arrangement from a pair $(\{S^*_{L_u}\}_{u=1}^J,G_J)$ is not unique because, for example, there can be multiple RT-arrangements $\Gamma_J$ and $\Gamma_J'$ in different spacetime dimensions such that $S_{L_u}=S^*_{L_u}$ for $u=1,\cdots,J$.}. This bidirectional statement is the \textit{geometry-graph duality} formulated in \cite{Bao:2015bfa} from the perspective of partial cubes. Here, we summarize our version of the geometry-graph duality below.
\begin{theorem}[Geometry-graph duality(Partial cubes)]\label{thm:RTPC-PCRT}
    The holographic entropy inequalities can be equivalently defined in terms of
    \begin{enumerate}
        \item pairs $(\{S_{L_u}\}_{u=1}^J,\Gamma_J)$ of the set of holographic entropies and an RT arrangement in AdS$_{D+1}$/CFT$_D$, or
        \item pairs $(\{S^*_{L_u}\}_{u=1}^J,G_J)$ of a set of discrete entropies and an RT-region graph,
    \end{enumerate}
    such that $S_{L_u} = S^*_{L_u}$ for $\forall u$.
\end{theorem}

\subsection{Graphs from bitstrings and bitstrings from graphs}

Every vertex of a $J$-dimensional hypercube $H_J = (V_{H_J},E_{H_J})$ can be labeled by a length $J$ bitstring. By definition \ref{def:partialcube}, every vertex of a partial cube $G_J=(V_J,E_J)$ of $idim(G_J)=J$ can be labeled by a length $J$ bitstring. That is, there exists a map from $V_J$ to $\{0,1\}^J$ while preserving the graph distance between vertices. For later convenience, we explicitly write this down as \textit{isometric condition (graph-to-bitstring)}.
\begin{definition}[Isometric condition w.r.t $H_J$ (graph-to-bitstring)]\label{def:isometric-graph-bitstring}

    A graph $G=(V,E)$ satisfies an isometric condition with respect to a $J$-dimensional hypercube $H_J=(V_{H_J},E_{H_J})$ if there exist an path isometry $\overline{\iota}_J: V_{H_J} \to \{0,1\}^J$ and $\varphi: V \to V_{H_J}$ such that
        \begin{equation}
            d_H (\overline{\iota}_J\circ \varphi (v),\overline{\iota}_J\circ \varphi(v')) = d_G(v,v')
        \end{equation}
        for $\forall v,v'\in V$.  Moreover, we simply write $\overline{\iota}_J\circ \varphi$ as $\overline{\iota}_J: V \to \{0,1\}^J$, and
        \begin{equation}\label{eq:isometric-condition-hypercube}
            d_H (\overline{\iota}_J(v),\overline{\iota}_J(v')) = d_G(v,v').
        \end{equation}
\end{definition}
The path isometries are not unique since any map $\overline{\iota}'_{J}:V \to \{0,1\}^J$ preserving the graph distances as \eqref{eq:isometric-condition-hypercube} can be another path isometry. 

The canonical choice of a path isometry can be determined by the equivalence classes of a partial cube. For example, consider the RT-region graph $G_J=(V_J,E_J)$ of an RT arrangement $\Gamma_J$. First, we define an indicator function $\overline{\iota}_{J}^u: V_J \to \{0,1\}$ for $u=1,\cdots,J$ by
\begin{equation}\label{eq:Winkler-indicator}
    \overline{\iota}_{J}^u(v):=
    \begin{cases}
        1 & v \in V_{L_u}\\
        0 & v \in \overline{V}_{L_u}, \\
    \end{cases}
\end{equation}
where $V_{L_u}$ and $\overline{V}_{L_u}$ are the cut due to the equivalence class $\Theta_{L_u}[e]$. We then construct a path isometry $\overline{\iota}_{J}: V_{J}\to \{0,1\}^J$ by the concatenations of $\overline{\iota}_J^u(v)$, i.e.,
\begin{equation}
    \overline{\iota}_{J}(v) = \overline{\iota}_{J}^1(v) \| \cdots \|\overline{\iota}_{J}^u(v)\|\cdots \| \overline{\iota}_{J}^J(v).
\end{equation}
Here, the parallel lines $\|$ denote the concatenations of bits, e.g., $0\|1 = 01$ for $0,1\in \{0,1\}$ and $01\in \{0,1\}^2$. In particular, we call $\overline{\iota}_{J}(v_{A_k})$ for $\forall A_k \in [n+1]$ as \textit{occurrence} bitstrings.

Physically, the indicator function determines whether the bulk subregion associated with a vertex is inside, $\overline{\iota}_{J}^u(v)=1$, or outside, $\overline{\iota}_{J}^u(v)=0$, the entanglement wedge of the boundary subregion $L_u$ because each equivalence class of the RT-region graph corresponds to an RT hypersurface in the RT-arrangement as discussed in the previous subsection.

We can write the discrete entropy $S^*_{L_u}$ in \eqref{eq:discrete-entropy-edges} using $\overline{\iota}_J$, i.e.,
\begin{equation}\label{eq:discrete-entropy-graph}
\begin{split}
    S^*_{L_u} &= \sum_{(v \in V_{L_u},v'\in \overline{V}_{L_u})\in \Theta_{L_u}[e]}|(v,v')| = \sum_{(v,v')\in E_J} |\overline{\iota}_J^u(v)-\overline{\iota}_J^u(v')|\cdot|(v,v')|\\
\end{split}
\end{equation}
where $|(v,v')|$ is the edge weight of the edge between the vertices $v,v'\in V_J$.

Up to now, we have seen the following;
\begin{enumerate}
    \item (Geometry $\leftrightarrow$ Graph) Given an RT arrangement $\Gamma_{J}$, one can construct the RT-region graph $G_{J}$ and vice versa from theorem \ref{thm:RTPC-PCRT}.
    \item (Geometry $\rightarrow$ Bitstrings) Given an RT arrangement $\Gamma_{J}$, one can construct a set of bitstrings $\iota_{J}(V)$.
\end{enumerate}

We construct an RT arrangement from a set of bitstrings by bypassing theorem \ref{thm:RTPC-PCRT} with a choice of occurrence bitstrings. To see this, we introduce another isometric condition with respect to a $J$-dimensional hypercube. The difference between definition \ref{def:isometric-graph-bitstring} and \ref{def:isometric-bitstring-graph} is whether the path isometry maps from vertices to bitstrings or from bitstrings to vertices.  

\begin{definition}[Isometric condition (bitstring-to-graph)]\label{def:isometric-bitstring-graph}

    Suppose a graph $G_J = (V_J,E_J)$ is induced from a connected subset\footnote{A pair of length $J$ bitstrings, $x,x'\in\{0,1\}^J$, is said to be \textit{connected} if $d_H(x,x')=1$. A sequence of bitstrings $x_0,\cdots, x_n$ forms a \textit{Hamming path} if $d_H(x_i,x_{i+1})=1$ for $\forall i =0, \cdots, n-1$. A \textit{connected subset} $X_{J}\in \{0,1\}^J$ of bitstrings is a subset of $\{0,1\}^J$ such that there is a Hamming path between any pair of elements in $X_{J}$.} of bitstrings $X_J \subseteq \{0,1\}^J$ by a map $\iota_J : \{0,1\}^J \to V_J:=\{\iota_{J}(x)| \forall x\in X_{J} \}$ and $E_J:=\{(\iota_J(x),\iota_J(x'))| x,x'\in X_J\}$.

    The induced graph $G_J$ satisfies the isometric condition if the map is a path isometry, i.e.,
    \begin{equation}\label{eq:dH_dG_notation}
        d_H(x,x') = d_G(\iota_{J}(x),\iota_{J}(x')), \;\forall x,x'\in X_{J}.
    \end{equation}
    For simplicity, we write the condition as 
    \begin{equation}\label{eq:isometric-bitstring-to-graph}
        d_{H}=d_G.
    \end{equation}
\end{definition}

We can construct a partial cube $G_J=(V_J,E_J)$ where
\begin{equation}
\begin{split}
    V_{J} &:= \{\iota_{J}(x)| \forall x\in X_{J} \}\\
    E_{J} &:=\{e=(\iota_{J}(x),\iota_{J}(x'))| \forall x,x'\in X_{J}, \; d_H(x,x')=1\},
\end{split}
\end{equation} from a connected subset $X_J\subseteq \{0,1\}^J$ of bitstrings if and only if the isometric condition $d_H=d_G$ is satisfied. The induced graph $G_J$ is a unit-weighted partial cube. 

To construct an RT arrangement, we first identify occurrence bitstrings for $X_J\subseteq\{0,1\}^J$ and choose edge weights on the induced graph $G_J$. This choice must be consistent with the RT formula, and thus, the geometry is not unique.

Suppose we consider $n+1$ boundary subregions $A_k\in [n+1]$. We choose $n+1$ bitstrings to be occurrence bitstrings $x_{A_k}\in \{0,1\}^J$ for boundary subregions $A_k\in [n+1]$. With the path isometries, we have the boundary vertices $v_{A_k} := \iota_J(x_{A_k})$. This determines the components of polychromatic region $L_u$ for any $u$ and gives the physical labels to the cuts $V_J = V_{L_u}\cup \overline{V}_{L_u}$ or the equivalence classes $\Theta_{L_u}[e]$ of the partial cube $G_J$, i.e.,
\begin{equation} 
\begin{cases}
    \text{ if } x_{A_k}^u = 1, & v_{A_k}\in V_{L_u} \\
    \text{ if } x_{A_k}^u = 0, & v_{A_k}\notin V_{L_u} \\
\end{cases}
\end{equation}
where $x^u_{A_k}$ is the $u$-th bit of the bitstring $x_{A_k} \in \{0,1\}^J$.

For each cut, or equivalence class $\Theta_{L_u}[e]$, we choose a value for $S_{L_u}$. Then, the edge weights are assigned to edges such that \eqref{eq:discrete-entropy-edges} holds.

With the choice of the boundary vertices and the edge weight assignment, we can compute the discrete entropy $S^*_{L_u}$ by
\begin{equation}\label{eq:discrete-entropy-bitstrings}
\begin{split}
    S^*_{L_u} &= \sum_{(\iota_{J}(x) \in V_{L_u},\iota_{J}(x')\in \overline{V}_{L_u})\in \Theta_{L_u}[e]}|(\iota_{J}(x),\iota_{J}(x'))|\\
        &= \sum_{(\iota_{J}(x),\iota_{J}(x'))\in E_{J}}|x_u-x'_{u}|\cdot|(\iota_{J}(x),\iota_{J}(x'))|\\
\end{split}
\end{equation}
where $x_u$ and $x'_u$ are the $u$-th bit of $x$ and $x'$, respectively. Thus, we have a pair $(\{S^*_{L_u}\}_{u=1}^J,G_J)$. Moreover, by construction, we have a pair $(\{S_{L_u}\}_{u=1}^J,\Gamma_J)$ such that $S_{L_u} = S^*_{L_u}$ for $\forall u$.

It should be noted that, for each connected subset $X_J\subset \{0,1\}^J$ of bitstrings with a choice of the occurrence bitstrings, there is a set of pairs $(\{S_{L_u}\}_{u=1}^J,\Gamma_J)$ from $X_J$ because there are choices in the values of the edge weights.

\subsection{Contraction map}

A contraction map is defined as follows.
\begin{definition}[Contraction map]
    A map $f: \{0,1\}^M\to \{0,1\}^N$ is a contraction map if, for $x,x'\in \{0,1\}^M$,
    \begin{equation}
        d_H(x,x') \geq d_H(f(x),f(x')) .
    \end{equation}
\end{definition}

We motivate how the contraction map is related to holographic entropy inequalities by following \cite{Bao:2015bfa}. Then, we review `\textit{proof by contraction}'  method in theorem \ref{thm:proofbycontraction}.

Consider a candidate inequality 
\begin{equation}\label{eq:candidate-hei}
    \sum_{u=1}^M S_{L_u} \geq \sum_{v=1}^N S_{R_v} .
\end{equation}
From theorem \ref{thm:RTPC-PCRT} and the isometric condition $d_H=d_G$, there are connected subsets of bitstrings $X_M\subseteq \{0,1\}^M$ and $Y_N\subseteq \{0,1\}^N$, which contain the occurrence bitstrings
\begin{equation}
    x^u_{A_k} = 
    \begin{cases}
        1, & A_k \in L_u\\
        0, & \text{otherwise}\\
    \end{cases},\;
    y^v_{A_k} = 
    \begin{cases}
        1, & A_k \in R_v\\
        0, & \text{otherwise}\\
    \end{cases},
\end{equation}
where $x^u_{A_k}$ is the $u$-th bit of the occurrence bitstring $x_{A_k} \in \{0,1\}^M$\footnote{With the path isometry, we have $\iota_M(x_{A_k}) =v_{A_k}$ for $\forall k$.}. Similarly, $y^v_{A_k}$ is the $v$-th bit of the occurrence bitstring $y_{A_k} \in \{0,1\}^N$.

Each bitstring $x\in \{0,1\}^M$ represents a bulk subregion consisting a constant time slice of $AdS_{D+1}/CFT_{D}$ partitioned by a set of RT surfaces. The occurrence bitstrings $x_{A_k}\in \{0,1\}^M$ and $y_{A_{k'}} \in \{0,1\}^N$   for any $A_k,A_{k'}\in [n+1]$ represent the bulk subregions that are homologous to the boundary subregion $A_{k}$ and $A_{k'}$. One requires a consistency of the homology conditions between the LHS bitstings and those for the RHS bitstrings under a contraction map, i.e.,
\begin{equation}\label{eq:homology-conditions-bitstrings}
	f(x_{A_k}) = y_{A_k}, \;\forall k \in \{ 1,\cdots, n+1\},
\end{equation}
which imprints the inequality to the contraction map. Then, the contraction map satisfying the consistent homology conditions can be used to prove that the candidate entropy inequality is a valid HEI.
\begin{theorem}[`Proof by contraction' \cite{Bao:2015bfa}; HEI $\leftarrow$ contraction map]\label{thm:proofbycontraction}
    Let $f:\{0,1\}^M \to \{0,1\}^N $ be a contraction map, i.e.,
    \begin{equation} \label{eq:contractioncondition}
        d_H(x ,x') \geq d_H(f(x) ,f(x')) ,\; \forall x,x'\in \{0,1\}^M.
    \end{equation}
    If  $f(x_{A_k})  = y_{A_k}$ for $\forall k\in \{ 1,\cdots, n+1\}$, then (\ref{eq:genentineq-expand}) is a valid $n$-party HEI.
\end{theorem}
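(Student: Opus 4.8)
The plan is to run the standard ``cut-and-reglue'' argument on a bulk time slice, using the contraction map to turn a collection of RT surfaces for the left-hand side into a collection of (generally non-minimal) homologous surfaces for the right-hand side whose total area is no larger. Fix an \emph{arbitrary} holographic state on $[n+1]$ and a bulk Cauchy slice containing the RT surfaces $\gamma_{L_1},\dots,\gamma_{L_M}$. These $M$ hypersurfaces cut the slice into finitely many cells; assign to each cell the bitstring $x\in\{0,1\}^M$ whose $u$-th entry records on which side of $\gamma_{L_u}$ (inside or outside the homology region of $L_u$) the cell lies. By the definition of the occurrence bitstrings, the cell adjacent to the boundary region $A_k$ carries exactly $x_{A_k}$. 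Call a \emph{wall} a codimension-one piece separating two adjacent cells with bitstrings $x_w,x_w'$; then $\text{area}(\gamma_{L_u})$ is the total area of walls across which the $u$-th bit flips, so summing over $u$ gives $\sum_{u}\text{area}(\gamma_{L_u})=\sum_{w}\text{area}(w)\,d_H(x_w,x_w')$.

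Next I would push the cell labels forward through $f$: relabel each cell by $f(x)\in\{0,1\}^N$, and for each $v=1,\dots,N$ define $\tilde\gamma_{R_v}$ to be the union of walls across which the $v$-th bit of the pushed-forward label flips. The homology consistency $f(x_{A_k})=y_{A_k}$ guarantees that, asymptotically near the conformal boundary, the set of cells whose pushed-forward $v$-th bit equals $1$ restricts to $R_v$; hence $\tilde\gamma_{R_v}$ is a legitimate compact surface homologous to $R_v$, merely not necessarily the minimal one. The same bookkeeping as above gives $\sum_{v}\text{area}(\tilde\gamma_{R_v})=\sum_{w}\text{area}(w)\,d_H\big(f(x_w),f(x_w')\big)$. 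Now apply the contraction inequality $d_H(f(x_w),f(x_w'))\le d_H(x_w,x_w')$ wall by wall to conclude $\sum_{v}\text{area}(\tilde\gamma_{R_v})\le\sum_{u}\text{area}(\gamma_{L_u})$.

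Finally, I would invoke the RT formula in the direction of minimality: $\text{area}(\gamma_{R_v})\le\text{area}(\tilde\gamma_{R_v})$ because $\gamma_{R_v}$ is the minimal surface in the homology class of $R_v$. Chaining the three inequalities and dividing by $4G_N$ yields $\sum_v S_{R_v}\le\sum_u S_{L_u}$; since the holographic state was arbitrary, \eqref{eq:genentineq-expand} is a valid $n$-party HEI.

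The step I expect to be the main obstacle is making the cell/wall decomposition rigorous: one must argue that the $\gamma_{L_u}$ may be taken in generic (transversal) position — or perturbed slightly to be so — so that the partition into cells, the absence of tangencies, and the accounting of areas by walls are all well defined and the area-by-$d_H$ identity holds; and one must check that the pushed-forward loci $\tilde\gamma_{R_v}$ genuinely lie in the homology class of $R_v$, i.e., that the boundary condition $f(x_{A_k})=y_{A_k}$ on $f$ really translates into the topological homology constraint in the bulk, including the behaviour at the asymptotic boundary and around the purifier $O$. A secondary subtlety is the placement of all the $\gamma_{L_u}$ on a single slice; at the leading RT order this is handled by working on a common Cauchy slice (or via a maximin argument), which I would invoke rather than re-derive.
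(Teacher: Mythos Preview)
Your proposal is correct and is essentially the same argument as the paper's sketch (and the original proof in \cite{Bao:2015bfa}): partition the bulk by the LHS RT surfaces, write each side as a sum over walls weighted by Hamming distances of the adjacent cell labels, apply the contraction condition wall by wall, and then invoke minimality of the true RHS RT surfaces. The only cosmetic difference is that the paper phrases the same computation in the graph/discrete-entropy language, where your ``walls'' become edges of the induced graph $G_M\subseteq H_M$, your $\text{area}(w)$ becomes the edge weight $|(\iota_M(x),\iota_M(x'))|$, and your $\sum_v\text{area}(\tilde\gamma_{R_v})$ is their $\sum_v\mathbb{S}^*_{R_v}$; the key inequality \eqref{eq:hei-verifier-positive} is exactly your wall-by-wall bound.
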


\subsection{Graphs and holographic geometries of HEI from contraction map}

In this subsection, we see that theorem \ref{thm:proofbycontraction} is equivalent to the following statement. 
\begin{itemize}
    \item \textit{Any holographic geometry with RT arrangements induced by the contraction map satisfies the HEI. }
\end{itemize}

To see the equivalence, let us assume that there exists a contraction map $f:\{0,1\}^M\to Im(f)\subseteq \{0,1\}^N$ for a candidate inequality \eqref{eq:candidate-hei} satisfying the homology conditions on the boundary $f(x_{A_k})  = y_{A_k}$ for $\forall k\in \{ 1,\cdots, n+1\}$. First, we describe how the contraction map induces holographic geometries. We then demonstrate that all the induced holographic geometries satisfy the inequality.

With the isometric condition $d_H=d_G$ with respect to the $J$-dimensional hypercubes $H_{J}=(V_{H_J},E_{H_J})$ applying to $\{0,1\}^J$ for $J=M,N$, we have 
\begin{enumerate}
    \item path isometries, $\iota_M:\{0,1\}^M \to V_M$, $\iota_N:\{0,1\}^N \to V_N$
    \item the graph contraction map associated with a map $f:\{0,1\}^M \to \{0,1\}^N$ \cite{Bao-2024-towardscompleteness}, 
    \begin{equation}\label{eq:graphmap}
        \Phi_f: H_M \to \Phi_f(H_M) \subseteq H_N .
    \end{equation}
\end{enumerate}
It should be noted that, first, all the graphs above are unit-weighted partial cubes. Second, the image graph $\Phi_f(H_M)$ depends on the contraction map $f$ while the domain of the graph contraction map, which is $H_M$, is the same for any choice of contraction map for the HEI. As we will see later, the \textit{preimage} and the \textit{kernel} of the graph map $\Phi_f$, instead of the domain, are the keys to our proof.

Hence, we now introduce the \textit{preimage} of an image graph under a graph map and the \textit{kernel} of the graph map.

\begin{definition}[Preimage of graph map]\label{def:preimage-graph}
    Let $\Phi:G_1\to G_2$ be a graph map between graphs $G_1=(V_1,E_1)$ and $G_2=(V_2,E_2)$. The preimage 
    \begin{equation}
        PreIm(\Phi(G_1)):=\Big(PreIm_V(\phi^V(V_1)),PreIm_E(\phi^E(E_1))\Big)    
    \end{equation} 
    of $\Phi(G_1) \subseteq G_2$ under a graph map $\Phi:=(\phi^V,\phi^E)$ is
    \begin{equation}
    \begin{split}
        PreIm_V(\phi^V(V_1)):=\{v\in V_1| \phi^V(v) \in V_2 \}  \\
        PreIm_E(\phi^E(E_1)):=\{e\in E_1|\phi^E(e) \in E_2\}. \\
    \end{split}
    \end{equation}
    Note that the preimage $PreIm_E(\phi^E(E_1))$ contains the edges $e\in E_1$ mapped to self-loops. 
\end{definition}

The kernel of a graph map is the preimage of a null subgraph\footnote{A graph is a null graph if its vertex set and edge set are empty\cite{gross-2006-graph}. }.
\begin{definition}[Kernel of graph map]\label{def:kernel-graph}
    The kernel of a vertex map and an edge map of a graph map $\Phi:=(\phi^V,\phi^E)$ on a graph $G:=(V,E)$ is defined as
    \begin{equation}
    \begin{split}
        Ker(\phi^V) := \{v\in V |\phi^V(v) = \emptyset_V \}\\
        Ker(\phi^E) := \{ e\in E |\phi^E(e) = \emptyset_E\}
    \end{split}
    \end{equation}
    where $\emptyset_V$ and $\emptyset_E$ are the null vertices and the null edges, respectively. We denote by $Ker(\Phi):=(Ker(\phi^V),Ker(\phi^E))$ the kernel of the graph map $\Phi$.
\end{definition}
For convenience, we write the set difference of the preimage and the kernel as 
\begin{equation}\label{eq:nontrivial_preimage}
    \mP(\Phi(G)):= PreIm(\Phi(G))\setminus Ker(\Phi).
\end{equation}
We will refer to it as a nontrivial preimage of $\Phi(G)$. 

If a map $f:\{0,1\}^M \to \{0,1\}^N$ is a contraction map, $\mP(\Phi_f(H_M))$ is a $M$-dimensional hypercube $H_M$ because the kernel is trivial, i.e., $Ker(\Phi_f)=(\emptyset_V, \emptyset_E)$. The triviality of the kernel can be simply seen by the fact that the graph contraction map $\Phi_f$ maps the edges of $H_M$ to the edges and the loops in $\Phi_f(H_M)$. Then, most importantly, one can find a corresponding subgraph $G_M\subseteq \mP(\Phi_f(H_M))$, satisfying the isometric condition $d_H=d_G$ \eqref{eq:isometric-bitstring-to-graph}, in the preimage $\mP(\Phi_f(H_M))$ for any connected subset $X_M\in \{0,1\}^M$.

The graph contraction map $\Phi_f$ in \eqref{eq:graphmap} determines the preimage $\mP(\Phi_f(H_M))= H_M$\footnote{If a map is not a contraction map, we will see that   $\mP(\Phi_f(H_M))\subset H_M$ in lemma \ref{lem:kernel-preimage}. We write $\mP(\Phi_f(H_M))= H_M$ explicitly to emphasize that the preimage depends on the contraction map.} and the image graph $\Phi_f(H_M)\subseteq H_N$. With a choice of edge weights, any subgraph $G_M\subseteq \mP(\Phi_f(H_M))$ provides a pair $\{\{S^*_{L_u}\}_{u=1}^{M},G_M\}$ and equivalently $\{\{S_{L_u}\}_{u=1}^{M},\Gamma_M\}$. They give the LHS of the inequality \eqref{eq:candidate-hei}. Similarly, any connected subgraph $G_N\subseteq \Phi_f(H_M)$ with a choice of edge weights provides a pair $\{\{S^*_{R_v}\}_{v=1}^{N},G_N\}$ and equivalently $\{\{S_{R_v}\}_{v=1}^{N},\Gamma_N\}$. 

From theorem \ref{thm:RTPC-PCRT}, we have
\begin{equation}
    \sum_{u=1}^M   S_{L_u} =\sum_{u=1}^M   S^*_{L_u}, \;\sum_{v=1}^N  S_{R_v} =\sum_{v=1}^N  S^*_{R_v}, 
\end{equation}
where the sum of the discrete entropies on the LHS and that of the RHS are
\begin{equation}
\begin{split}
    \sum_{u=1}^M S^*_{L_u} &= \sum_{(\iota_{M}(x),\iota_{M}(x'))\in E_{M}} d_H(x,x')|(\iota_{M}(x),\iota_{M}(x'))|,\; x,x'\in X_M \subseteq \{0,1\}^M\\
    \sum_{v=1}^N S^*_{R_v} &= \sum_{(\iota_{N}(y),\iota_{N}(y'))\in E_{N}} d_H(y,y')|(\iota_{N}(y),\iota_{N}(y'))|,\; y,y'\in f(X_M) \subseteq \{0,1\}^N.\\
\end{split}
\end{equation}

From the assumption, we have a contraction map and its homology conditions \eqref{eq:homology-conditions-bitstrings} for the candidate inequality \eqref{eq:candidate-hei}. Then, we can write down the sum of the discrete entropies on the RHS whose edge weights are not optimized, i.e.,
\begin{equation}
    \sum_{v=1}^N \mbS^*_{R_v}  = \sum_{(\iota_{N}(x),\iota_{N}(x'))\in E_{N}} d_H(f(x),f(x'))|(\iota_{M}(x),\iota_{M}(x'))|.
\end{equation}
In general, we have the inequality\footnote{The inequality \ref{eq:nonoptimized-optimized-inequality} can be proved geometrically based on the entanglement wedge nesting. See figure \ref{fig:mmi-geometry} for an example geometry for the monogamy of mutual information.} 
\begin{equation}\label{eq:nonoptimized-optimized-inequality}
    \sum_{v=1}^N \mbS^*_{R_v}  \geq \sum_{v=1}^N S^*_{R_v}.
\end{equation}
Moreover, we have
\begin{equation}\label{eq:hei-verifier-positive}
\begin{split}
    &\sum_{u=1}^M S^*_{L_u} - \sum_{v=1}^N \mbS^*_{R_v} \\
    &= \sum_{(\iota_M(x),\iota_M(x'))\in E_{M}}\Big( d_H(x,x') - d_H(f(x) ,f(x')) \Big) |(\iota_M(x),\iota_M(x'))| \geq 0\\
\end{split}
\end{equation}
for any choice of edge weights on $G_M$. Therefore, if there exists a contraction map, we have 
\begin{equation}
    \sum_{u=1}^M S^*_{L_u} \geq \sum_{v=1}^N \mbS^*_{R_v}  \geq \sum_{v=1}^N S^*_{R_v}.
\end{equation}

For a contraction map with the homology conditions for the candidate inequality, the same procedure goes through for any choice of edge weights on the induced subgraph $G_M\subseteq \mP(\Phi_f(H_M)) = H_M$. Thus, any induced holographic geometries, or RT arrangements, constructed from $G_M$ with the choices of its edge weights, satisfy the inequality.

Before we move on to the next section, we introduce and discuss an \textit{adjacency condition} defined below. 
\begin{definition}[Adjacency condition]\label{def:adjacency}
    
    Suppose a graph $G_J = (V_J,E_J)$ is induced from a connected subset of bitstrings $X_J \subseteq \{0,1\}^J$ by a map $\iota_J : \{0,1\}^J \to V_J:=\{\iota_{J}(x)| \forall x\in X_{J} \}$ and $E_J:=\{(\iota_J(x),\iota_J(x'))| x,x'\in X_J\}$.

    The induced graph $G_J$ satisfies the adjacency condition if the map satisfies
    \begin{equation}
        d_G(\iota_J(x), \iota_J(x')) = 1
    \end{equation}
     for $x,x'\in X_J$ satisfying
     \begin{equation}
         d_H(x,x') =1.
     \end{equation}
     For simplicity, we write the condition as 
     \begin{equation}\label{eq:adjacencycondition}
         d_H \adj d_G.
     \end{equation}
     
\end{definition}

When a map $\tilde{f}:\{0,1\}^M\to\{0,1\}^N$ is a non-contraction map given in definition \ref{def:non-contractionmap}, we will see in section \ref{sec:proof} that there exists a subgraph $G_M\subset \mP(\Phi_f(H_M))$ in the preimage that does not satisfy the isometric condition $d_H=d_G$ \eqref{eq:isometric-bitstring-to-graph} as well as the adjacency condition \eqref{eq:adjacencycondition}. The map $\tilde{\iota}_M:\{0,1\}^M\to V_M$ that induces the subgraph $G_M$ gives
\begin{equation}
    d_G(\tilde{\iota}_J(x), \tilde{\iota}_J(x')) \neq 1
\end{equation}
for $x,x'\in X_M$ satisfying
\begin{equation}
    d_H(x,x') =1.
\end{equation}

Hence, the induced subgraph does not inherit the adjacency in the connected subset of bitstrings. This discrepancy in adjacency between the two pictures gives rise to obstructions in the smoothness of the geometry induced by RT arrangements. In the next section, we will see that those RT arrangements do not satisfy the inequality.

\section{Lessons from Simple Scenarios}
\label{sec:examples}
In this section, we will discuss three examples to motivate our proof method and illustrate the underlying physics. This will also help us bridge the physical considerations to mathematical statements. 

In particular, when one has a non-contraction map, the image graph is (i) not a partial cube, (ii) a cubical graph, but not a partial cube, and iii) a partial cube. It should be noticed that the image graph being a partial cube is necessary, but not sufficient, for a map to be contractive. The key is that the preimage of the image graph under the graph non-contraction map contains a subgraph that violates the adjacency condition $d_H \adj d_G$ \eqref{eq:adjacencycondition}. As we will see in section \ref{sec:proof}, the existence of such a subgraph leads the non-contraction map to induce geometries that do not satisfy a given inequality. 

In the first and second examples, we present a geometry induced by the non-contraction maps associated with some given false HEIs. The third example deals with the monogamy of mutual information(MMI) with a non-contraction map. We will see that such a map also induces a geometry that does not satisfy the MMI.

\subsection{Deforming the Monogamy of Mutual Information}\label{subsec:mm-deformed}

Consider the MMI inequality
\begin{equation}\label{eq:mmi}
    S(AB)+S(AC)+S(BC) \geq S(A)+S(B)+S(C)+S(ABC),
\end{equation}
for three disjoint regions $A,B,C$ (and a purifier $O$), being probed on a static time-slice of $AdS_3/CFT_2$, such that the associated bulk entanglement wedges $W(AB), W(AC), W(BC)$ and $W(ABC)$ are connected. The contraction map proof method can be thought of as cutting and gluing RT surfaces. Recall the contraction map proof method from section \ref{sec:review-cmap}, see figure \ref{fig:mmi-geometry} for a visual demonstration of the same. The RT surfaces associated with regions $AB$, $BC$, and $CA$ are drawn in red, blue, and green, respectively (fig \ref{fig:lhs-terms}). One can cut the red and blue surfaces and glue them such that they are now homologous to regions $A$ and $C$ (shown in purple and orange dashed curves, respectively) (fig \ref{fig:lhs-cut-glue}), before smoothly deforming them into minimal surfaces homologous to $A$ and $C$ (shown in purple and orange solid curves, respectively). For this particular choice of geometry, the RT surfaces corresponding to $AC$ (drawn in green) is cut and glued as the RT surfaces homologous to $B$ and $ABC$ (fig \ref{fig:rhs-terms}).

\begin{figure}[t]
    \centering
    \begin{subfigure}{0.3\textwidth}
        \centering
        \includegraphics[width=\linewidth]{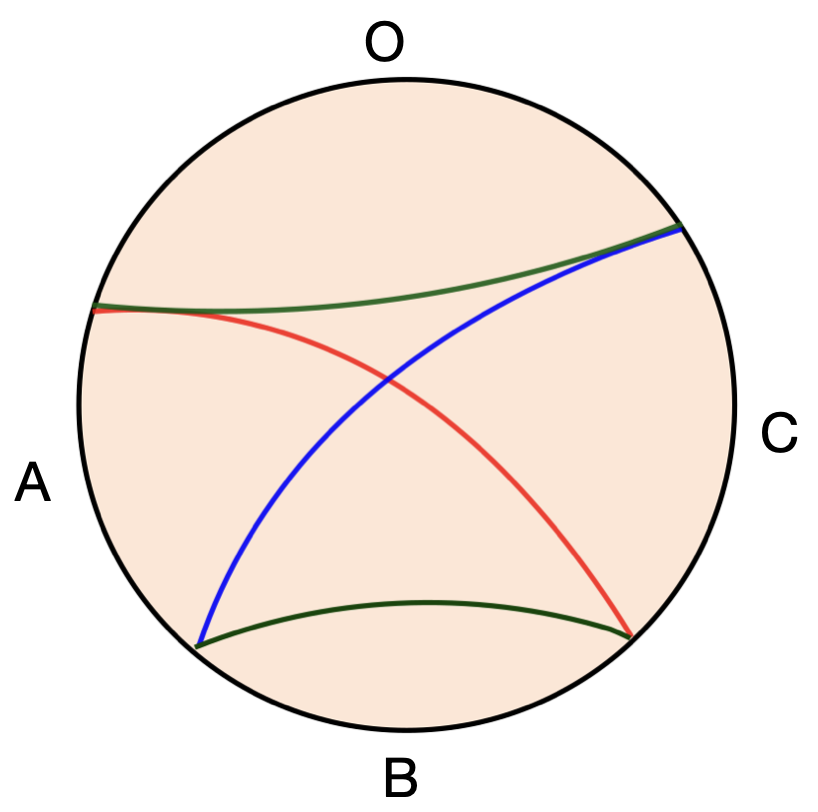}
        \caption{\small{RT surfaces associated with the LHS terms.}}
        \label{fig:lhs-terms}
    \end{subfigure}\hfill
    \begin{subfigure}{0.3\textwidth}
        \centering
        \includegraphics[width=\linewidth]{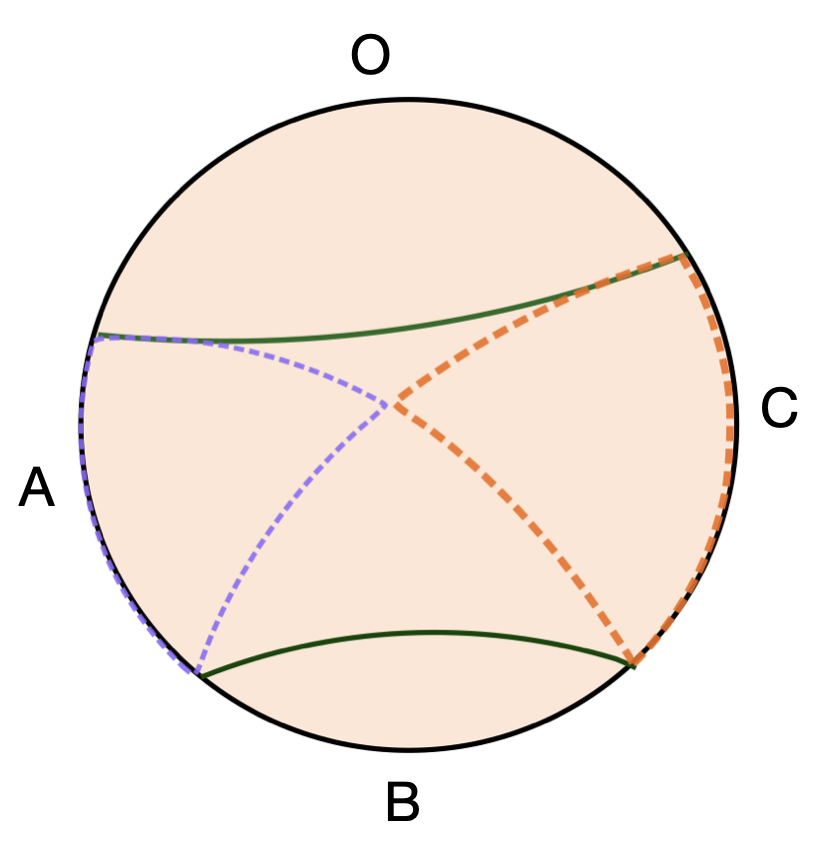}
        \caption{\small{Non-minimal surfaces homologous to regions in RHS from LHS RT surfaces.}}
        \label{fig:lhs-cut-glue}
    \end{subfigure}\hfill
    \begin{subfigure}{0.3\textwidth}
        \centering
        \includegraphics[width=\linewidth]{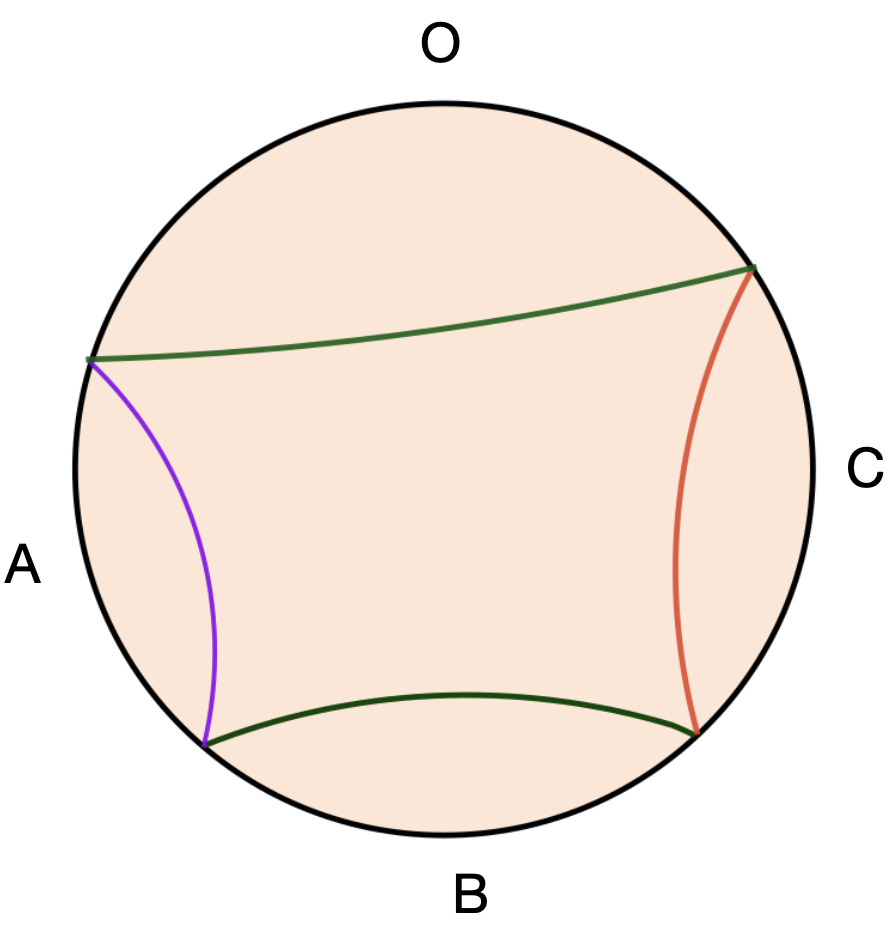}
        \caption{\small{RT surfaces associated with the RHS terms.}}
        \label{fig:rhs-terms}
    \end{subfigure}
    \caption{\small{The RT surfaces associated with the LHS terms (\ref{fig:lhs-terms}) is cut and glued (rearranged) to correspond to (non-minimal) surfaces associated with RHS terms (\ref{fig:lhs-cut-glue}) and finally smoothly deformed to their respective minimal surfaces (\ref{fig:rhs-terms}).}}
    \label{fig:mmi-geometry}
\end{figure}

Let us now deform the MMI inequality to,
\begin{equation}\label{eq:deformed-mmi}
    S(AB)+S(AC)+S(BC) \geq S(A)+S(B)+S(C)+S(ABC)+ \textcolor{red}{S(AB)}.
\end{equation}
Clearly, this is a false inequality. We will now discuss what goes wrong at the level of contraction map (see table \ref{tab:cmap-deformed-mmi}). We append a column to the MMI contraction map corresponding to extra term. We see that after imposing the boundary conditions, we reach a point of conundrum where any bit choice would result in a contradiction, i.e, those bits are simultaneously required to be $0$ and $1$. What we see here is a false inequality leading to a non-contractive mapping. This feature was observed in a previous work \cite{Bao-2024-properties}.

\begin{table}[h!]
\centering
\begin{tabular}{@{}lccclccccc@{}}
\toprule
 & \textbf{AB} & \textbf{AC} & \textbf{BC} &  & \textbf{A} & \textbf{B} & \textbf{C} & \textbf{ABC} & \multicolumn{1}{l}{{\color[HTML]{FE0000} \textbf{AB}}} \\ \cmidrule(r){1-4} \cmidrule(l){6-10} 
\textbf{O} & \textbf{0} & \textbf{0} & \textbf{0} &  & \textbf{0} & \textbf{0} & \textbf{0} & \textbf{0} & {\color[HTML]{FE0000} 0} \\
 & 0 & 0 & 1 &  & 0 & 0 & 0 & 1 & {\color[HTML]{FE0000} } \\
 & 0 & 1 & 0 &  & 0 & 0 & 0 & 1 & {\color[HTML]{FE0000} } \\
\textbf{C} & \textbf{0} & \textbf{1} & \textbf{1} &  & \textbf{0} & \textbf{0} & \textbf{1} & \textbf{1} & {\color[HTML]{FE0000} \textbf{0}} \\
 & 1 & 0 & 0 &  & 0 & 0 & 0 & 1 & {\color[HTML]{FE0000} } \\
\textbf{B} & \textbf{1} & \textbf{0} & \textbf{1} &  & \textbf{0} & \textbf{1} & \textbf{0} & \textbf{1} & {\color[HTML]{FE0000} \textbf{1}} \\
\textbf{A} & \textbf{1} & \textbf{1} & \textbf{0} &  & \textbf{1} & \textbf{0} & \textbf{0} & \textbf{1} & {\color[HTML]{FE0000} \textbf{1}} \\
 & 1 & 1 & 1 & \multirow{-9}{*}{\textbf{}} & 0 & 0 & 0 & 1 & {\color[HTML]{FE0000} } \\ \bottomrule
\end{tabular}
\caption{Failure to fill up the contraction map of inequality \ref{eq:deformed-mmi}. The empty bits are simultaneously required to be 0 and 1 to satisfy the contraction map condition.}
\label{tab:cmap-deformed-mmi}
\end{table}

Using the triality proposed in \cite{Bao-2024-towardscompleteness}, we will now discuss what it means in the graph picture. We will choose the remaining bits (which gives a non-contraction map) and draw the graph (see figure \ref{fig:mmi-deformed}).
\begin{figure}[t]
    \centering
    \begin{subfigure}{0.3\textwidth}
        \centering
        \includegraphics[width=\linewidth]{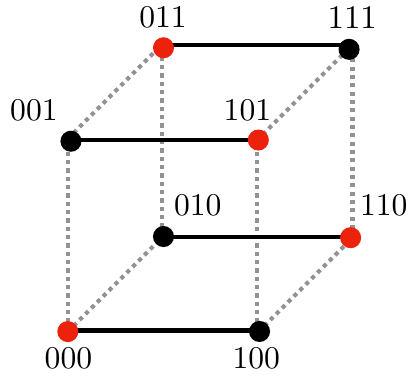}
        \caption{\small{$\mP(\Phi_{\tilde{f}}(H_3))$}}
        \label{fig:mmi_deformed_preimage}
    \end{subfigure}\hfill
    \begin{subfigure}{0.5\textwidth}
        \centering
        \includegraphics[width=\linewidth]{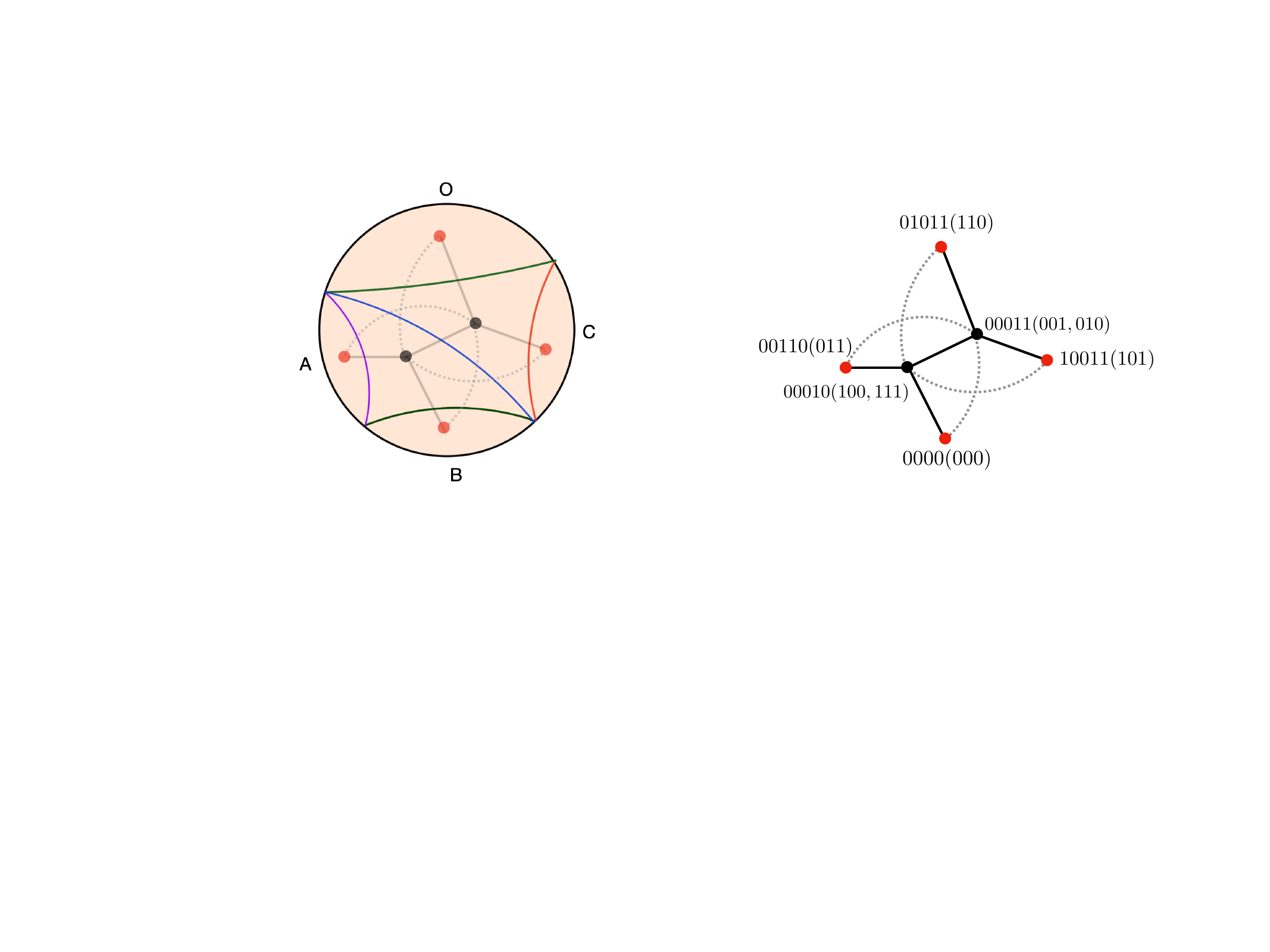}
        \caption{\small{$\Phi_{\tilde{f}}(H_3)$}}
        \label{fig:mmi_deformed_image}
    \end{subfigure}\hfill
    \caption{\small{The violations of $d_H \adj d_G$: The dotted lines emphasize the null edges that violate the adjacency condition. The red circles are the boundary vertices. The black dots are the bulk vertices. (a) The nontrivial preimage $\mP(\Phi_{\tilde{f}}(H_3))$ has eight edges missing. Those eight edges belong to the kernel $Ker(\Phi_{\tilde{f}})$. (b) The image graph $\Phi_{\tilde{f}}(H_3)$  corresponding to the non-contraction map $\tilde{f}$ after choosing the empty bits in the last column to be $1,1,0,0$ for the second, third, fifth, and eighth row, respectively.}}
    \label{fig:mmi-deformed}
\end{figure}
We see that the image graph has an isometric embedding in $H_5$ respecting the boundary conditions and is a partial cube. So the image graph being a partial cube is not a sufficient condition, but rather a necessary one for the validity of a HEI. Looking at the table \ref{tab:cmap-deformed-mmi}, it is clear that the domain of the graph maps to the image without preserving all adjacent edges. Such a mapping does not take a RT arrangement of graphs to another smoothly. Strictly from a graph picture, this can be traced back to deletion of edges in the original hypercube such that the pre-image is a cubical graph instead of a hypercube. By deleting edges, one violates the adjacency condition $d_H\adj d_G$ in \eqref{eq:adjacencycondition} between vertices whose bitstring labels differ by a single bit. Recall that the vertices of the graph in the holographic geometry correspond to bulk subregions, chambered by the RT surfaces. Each such bulk subregion is mapped with a bitstring and two subregions separated by a RT surface differ by a single bit (unless there are zero-volume chambers sitting on the RT surfaces, which are referred as \emph{unphysical bitstrings}). The contraction map can be thought of as an operation on the RT surfaces to be smoothly deformed from one configuration to another. In this example, there is no such continuous deformation of the RT surfaces corresponding to the LHS into those corresponding to the RHS. So, how exactly does this graph mapping differ from those of graph contraction maps \cite{Bao-2024-towardscompleteness} (also known as \emph{weak graph homomorphisms}\cite{Bao:2015bfa})?

In the usual story of graph contraction maps \cite{Bao-2024-towardscompleteness}, we have a set of hypercube vertices and one identifies some subset of vertices as the same, or equivalently this can be thought of as taking a specific partition of the set of vertices. The new edge after vertex identification are derived from the ones existing before identification, i.e., if two vertices are identified then the set of neighboring vertices is the union of their neighboring vertices before the identification. This ensures that the image graph is always a contraction. However, it is not guaranteed that the image graph is isometrically embedabble in a hypercube. In the case when the image graph can be isometrically embedded in a hypercube, we get a partial cube and a valid HEI. Therefore, we can extract at least two independent conditions for the existence of a HEI, the first is that the image graph is a partial cube, and the second is that the edge map respects the adjacency condition $d_H\adj d_G$. 

In the case of inequality \ref{eq:deformed-mmi}, while the image graph is a partial cube, there is no partition of vertex sets preserving edge adjacency that can realize the image graph in figure \ref{fig:mmi-deformed}. In other words, the associated image graph does not come from a contraction of the hypercube $H_3$ but rather it comes from a contraction of a cubical graph\footnote{Cubical graphs are not always isometrically embeddable in a hypercube.} (which is a subgraph of $H_3$). Such cubical graphs, that are not hypercube embeddable, do not correspond to sensible holographic geometries.  More precisely, the non-contractiveness and the breakdown of adjacency manifests as a breakdown of the geodesic structure in the bulk.

Therefore we see that a holographic entropy inequality corresponds to a graph mapping from a hyperplane arrangement of a graph to that of another respecting certain adjacency conditions ensuring the existence of a smooth geometry corresponding to the graph. We will use this insight to prove the completeness of the contraction map proof method.

Before moving on from this example, let us discuss the image graph in terms of the corresponding geometry we chose earlier. It is clear that one can never get the RT surface arrangement of figure \ref{fig:mmi-deformed-surfaces} by cutting and gluing RT surfaces in figure \ref{fig:lhs-terms}.

\begin{figure}[t]
    \centering
    \includegraphics[width=0.4\linewidth]{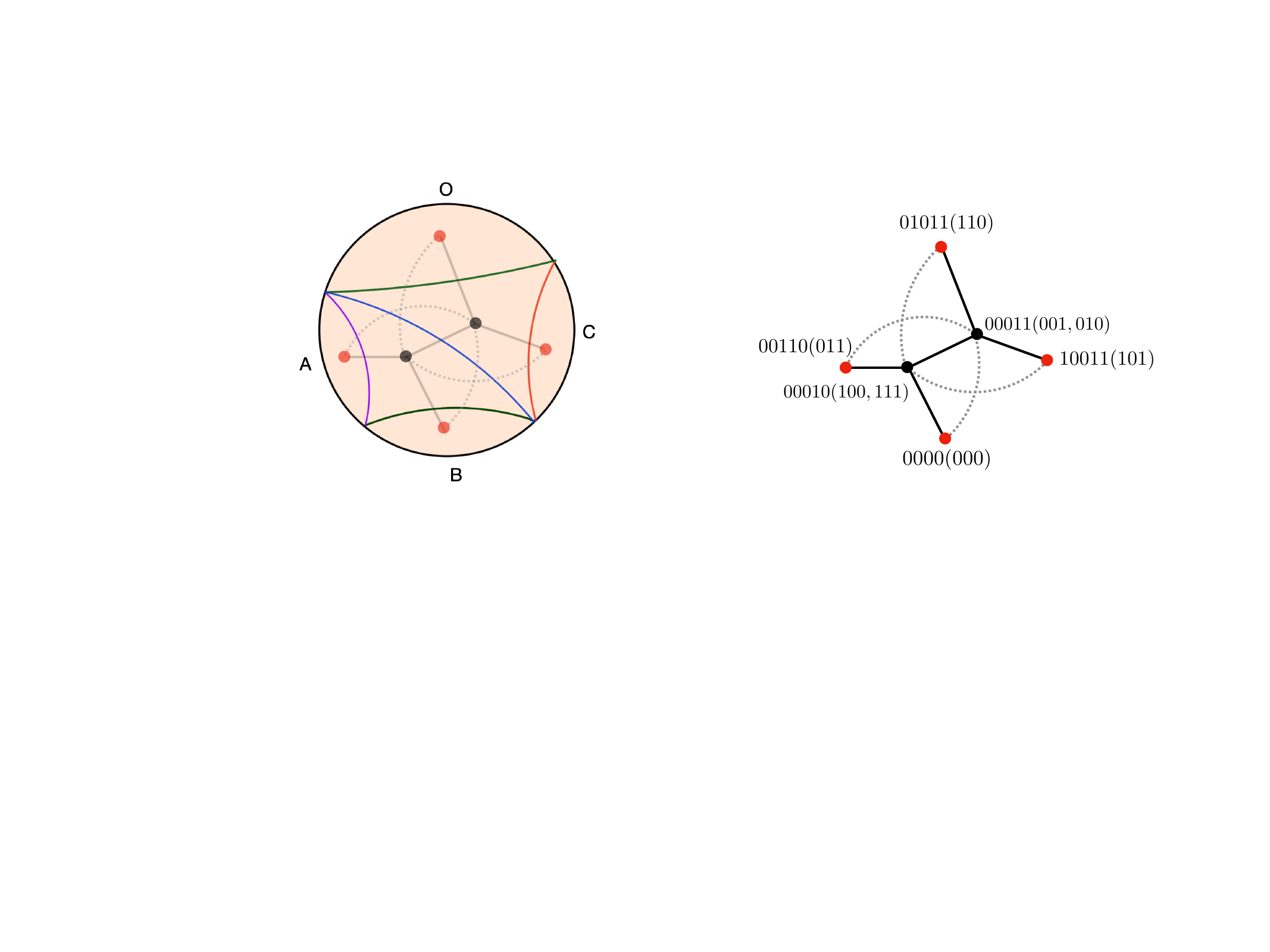}
    \caption{The minimal surface arrangement corresponding to the RHS of inequality \ref{eq:deformed-mmi}. It is impossible to smoothly deform the configuration of RT surfaces in figure \ref{fig:lhs-terms} into this configuration by cutting and gluing. It requires the RT surface associated with $AB$ to be deformed and not deformed simultaneously.}
    \label{fig:mmi-deformed-surfaces}
\end{figure}

\subsection{A Five-Party Example}
The inequality considered in subsection \ref{subsec:mm-deformed} is neither balanced, nor superbalanced. The example may appear a bit artificial and trivial to the reader. Now we will consider a five-party example,
\begin{equation}\label{eq:5partyinvalid}
    S(BC)+S(ACD)+S(ACE)+S(BCDE) \geq S(AC)+S(BCD)+S(BCE)+S(ACDE),
\end{equation}
which can be rephrased as,
\begin{equation}
    I(D : E | AC) - I(D: E| BC)\geq 0
\end{equation}
where we have defined
\begin{equation}
\begin{aligned}
    & I(X:Y):= S(X)+S(Y)-S(XY), \\
    & S(X|Y):= S(XY)-S(Y).
\end{aligned}
\end{equation}
Since the inequality \ref{eq:5partyinvalid} can be expressed as a difference of mutual information between $D$ and $E$ conditioned on $AC$ and $BC$ respectively, one can choose a geometry where the inequality is violated. Therefore, this is not a valid HEI. As expected, one cannot find a contraction map for this inequality. Using the deterministic rules for filling up the contraction map \cite{Bao-2024-properties}, one reaches a contradiction where a single bit is simultaneously required to be $0$ and $1$. We plot the graph in figure \ref{fig:5party-invalid-graph}  associated with the partially filled contraction map and see that the image graph is not isometrically embeddable in $H_4$\footnote{Note that the remaining four bitstrings are partially fixed to be $f(1000)\rightarrow 0--0$, $f(1110) \rightarrow 1--1$, $f(1010),f(1100)\rightarrow 0--1$. No choice for the remaining bits can give a contraction map.}. Therefore, the associated inequality is invalid. It is easy to see that this graph is a partial cube and can be embedded in a higher dimensional hypercube (e.g., in $H_6$). However, this will still yield a false inequality as the graph mapping does not preserve adjacency, i.e., it will yield a mapping where the pre-image does not satisfy the adjacency condition $d_H\adj d_G$.

\begin{figure}[t]
    \centering
    \includegraphics[width=0.7\linewidth]{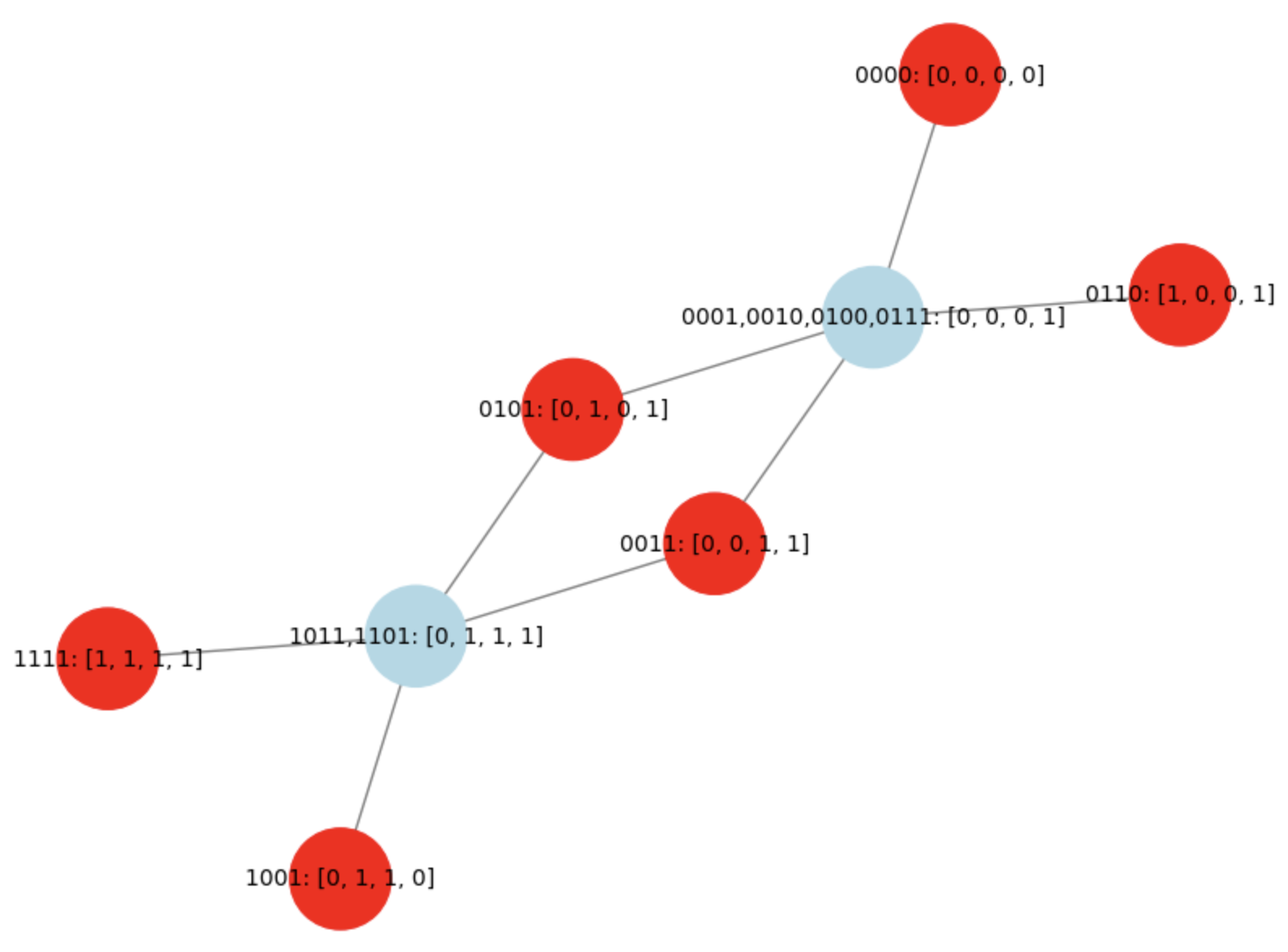}
    \caption{The graph corresponding to the contraction map after choosing the empty bits to be $1,1,0,0$ respectively. This graph (encoding the HEI through occurrence bitstrings) does not have an isometric embedding in $H_5$. The vertices in red correspond to occurrence bitstrings, and those in light-blue are fixed by the deterministic rules described in \cite{Bao-2024-properties}.}
    \label{fig:5party-invalid-graph}
\end{figure}

\subsection{Non-contraction and Geometry}
We will now look at an example of how a non-contraction map induces a geometry that does not satisfy the HEIs.

The choice of the boundary conditions we have in table \ref{tab:non-cmap-ex} gives the MMI \eqref{eq:mmi}, but the map $\tilde{f}$ is non-contractive. For example, the bitstring $001$ of the second row of table \ref{tab:non-cmap-ex} is non-contractive with respect to $000$ and $011$, i.e.,
\begin{equation}
    d_H(001,000) = 1 < d_H(\tilde{f}(001),\tilde{f}(000)) = 2,
\end{equation}
and
\begin{equation}
    d_H(001,011) = 1 < d_H(\tilde{f}(001),\tilde{f}(011)) = 2.
\end{equation}

Before presenting the example, we clarify here that an inequality is false if and only if no contraction map can be found. An inequality read off from a non-contraction map does not say anything about its validity. In this section, we provide how a non-contraction map induces a geometry that does not satisfy the MMI. In section \ref{sec:proof}, we prove that any non-contraction map induces such a geometry. Hence, if one cannot find any contraction map for a given inequality, the inequality is false.

\begin{figure}[t]
    \centering
    \begin{subfigure}{0.3\textwidth}
        \centering
        \includegraphics[width=\linewidth]{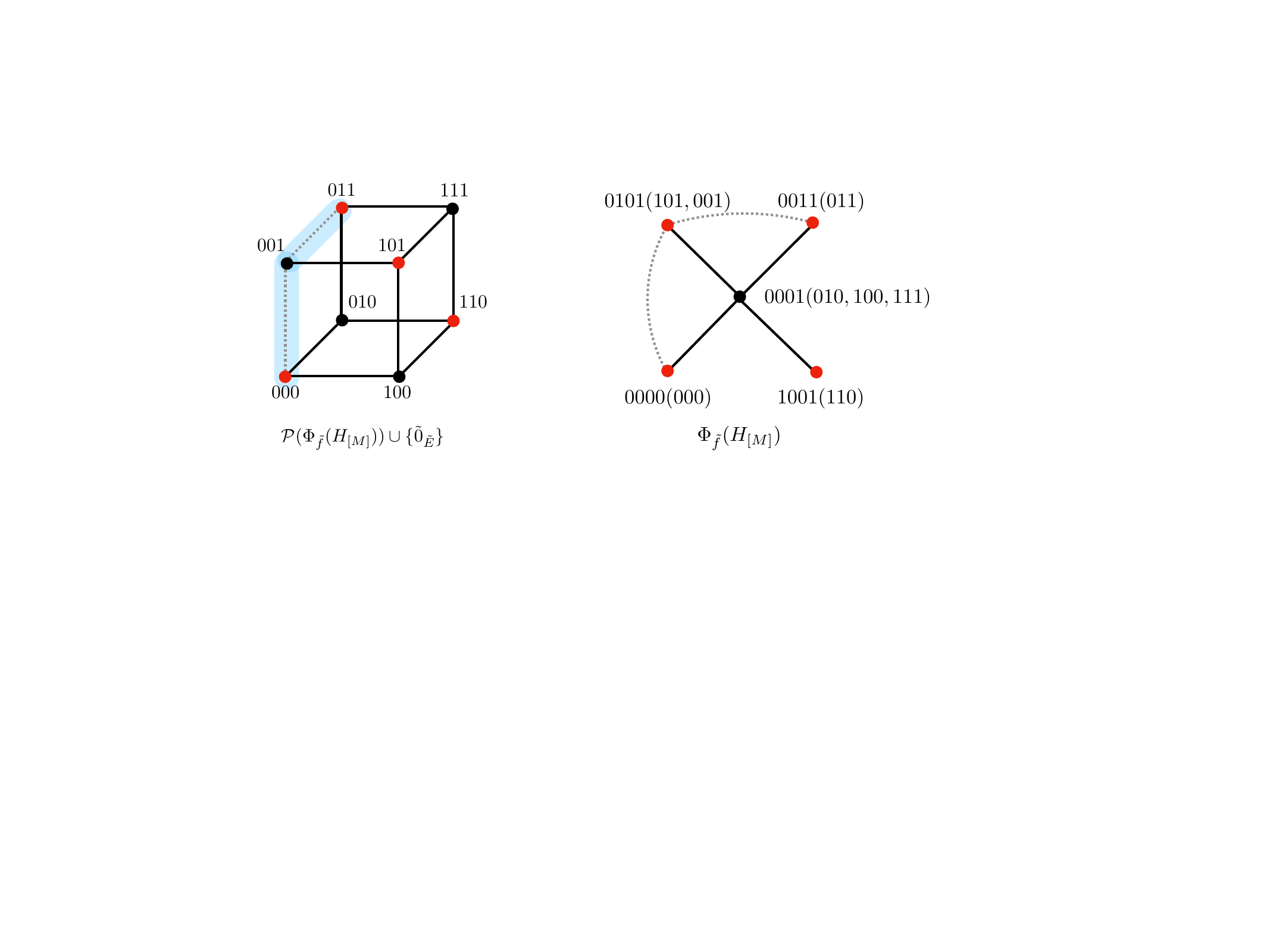}
        \caption{\small{$\mP(\Phi_{\tilde{f}}(H_3))\cup\{\tilde{0}_{\tilde{E}}\}$}}
        \label{fig:mmi_noncontraction_preimage}
    \end{subfigure}\hfill
    \begin{subfigure}{0.4\textwidth}
        \centering
        \includegraphics[width=\linewidth]{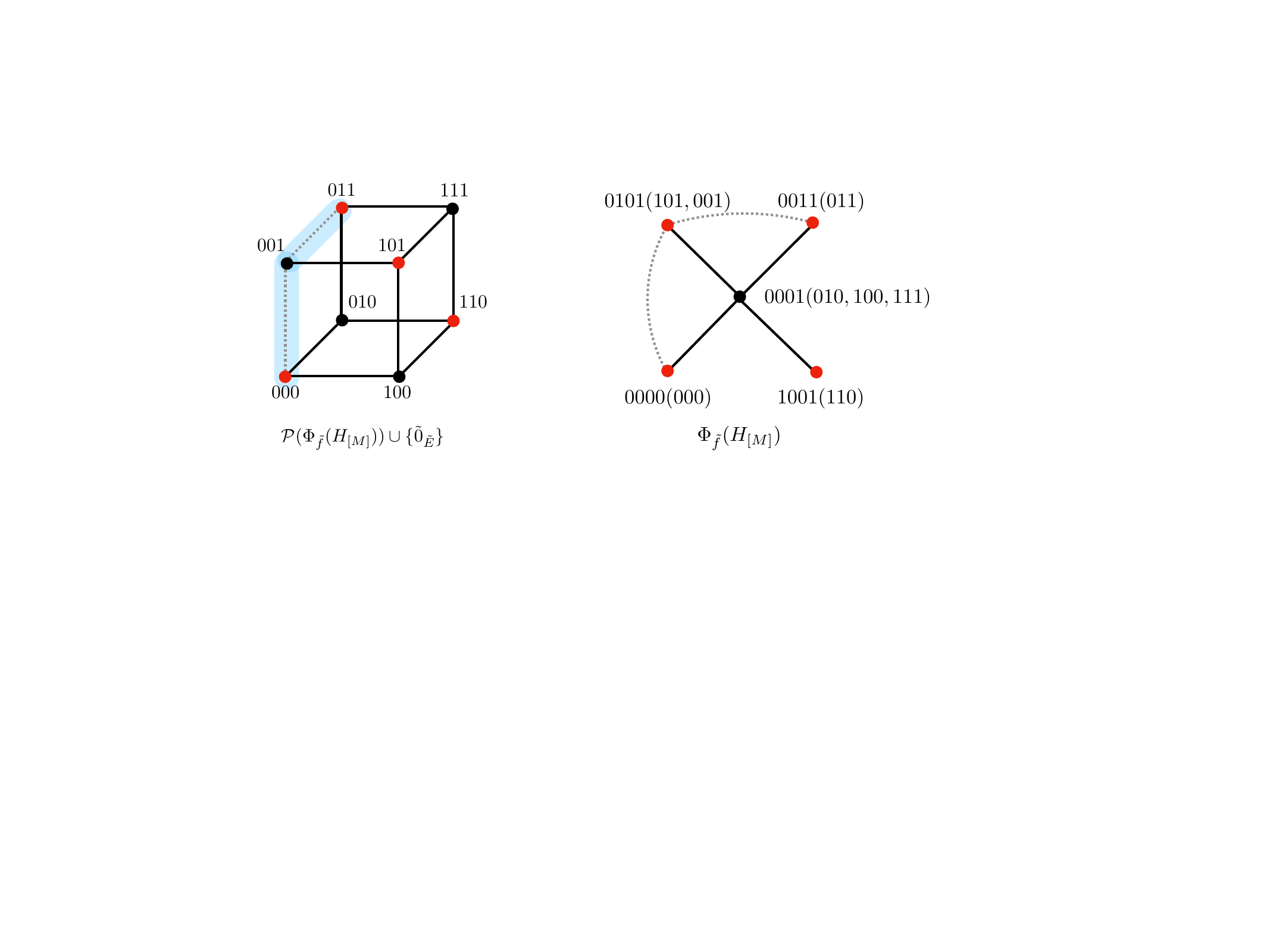}
        \caption{\small{$\Phi_{\tilde{f}}(H_3)$}}
        \label{fig:mmi_noncontraction_image}
    \end{subfigure}\hfill
    \caption{\small{The dotted lines stress the null edges between the vertices that do not satisfy the adjacency condition. (a) The nontrivial preimage $\mP(\Phi_{\tilde{f}}(H_3))\cup\{\tilde{0}_{\tilde{E}}\}$ whose two null edges are replaced by the virtual edges highlighted blue. (b) the image $\Phi_{\tilde{f}}(H_3)$. }}
    \label{fig:mmi-noncontraction-graph}
\end{figure}

Consider a choice of a connected subset of bitstrings, and its image under the non-contraction map $\tilde{f}$, 
\begin{equation}
    X:=\{000,001,011,101,110,11\},\; \tilde{f}(X) = \{0000,0001,0011,0101,1001\}.
\end{equation}
With the isometric condition $d_H=d_G$ with respect to the hypercubes $H_{3}$ and $H_{4}$, we have the path isometries, $\iota_M$, $\iota_N$, and the graph map $\Phi_{\tilde{f}}$. 

Using the path isometries, we have the unit-weighted graphs $G$ and $\Phi_{\tilde{f}}(G) \sim \Phi_{\tilde{f}}(H_{3})$\footnote{One can easily check that $\Phi_{\tilde{f}}(G)$ is graph isomorphic to $\Phi_{\tilde{f}}(H_{3})$}, which satisfy the adjacency condition $d_H\adj d_G$. Then, we can identify the nontrivial preimage $\mP(\Phi_{\tilde{f}}(H_{3}))$ and the image $\Phi_{\tilde{f}}(H_{3})$ as in figure \ref{fig:mmi-noncontraction-graph}. 

The nontrivial preimage $\mP(\Phi_{\tilde{f}}(H_3))$ is a cubical graph where the vertices $000,001,011$ do not satisfy the adjacency condition $d_H\adj d_G$. As a result, the edges $(001,011)$ and $(001,000)$ are absent. This is simply due to the fact that the kernel $Ker(\Phi_{\tilde{f}})$ of the graph contraction map $\Phi_{\tilde{f}}$ is non-trivial.

It should be noticed that there is another graph $\tilde{G}=(\tilde{V}_X,\tilde{E}_X)$ whose image graph $\Phi_{\tilde{f}}(\tilde{G})$ under the map $\Phi_{\tilde{f}}$ is also graph isomorphic to $\Phi_{\tilde{f}}(H_{3})$. For the map $\tilde{f}$ to give the MMI, any subgraphs of the domain graph that can be mapped to a connected subgraph with a set of proper boundary vertices of the image graph $\Phi_{\tilde{f}}$ should satisfy the inequality as discussed in section \ref{sec:review-cmap}. Hence, in our case, the subgraphs $G$ and $\tilde{G}$ with the graph non-contraction map should satisfy the inequality for the non-contraction map to give the MMI.

Thus, to verify the inequality, we consider the quantity,
\begin{equation}\label{eq:hei-verifier}
    \sum_{x,x'\in X}\Big( d_H(x,x') - d_H(\tilde{f}(x) ,\tilde{f}(x')) \Big) |(\iota_M(x),\iota_M(x'))|,
\end{equation}
and check its sign. As in \eqref{eq:hei-verifier-positive}, its sign is positive regardless of the choice of the edge weights if the map $\tilde{f}$ is contractive.  When the map $\tilde{f}$ is non-contractive, for instance, $d_H(\tilde{x},\tilde{x}') + \kappa = d_H(\tilde{f}(\tilde{x}),\tilde{f}(\tilde{x}'))$ for $\kappa \in \mbZ_{+}$ and a pair $\tilde{x},\tilde{x}'\in X$, then, \eqref{eq:hei-verifier} is not necessarily positive because of the pair $\tilde{x},\tilde{x}'$, i.e.,
\begin{equation}\label{eq:hei-verifier-noncontraction}
    \sum_{x,x'\in X}\Big( d_H(x,x') - d_H(\tilde{f}(x) ,\tilde{f}(x')) \Big) |(\iota_M(x),\iota_M(x'))| -\kappa  |(\iota_M(\tilde{x}),\iota_M(\tilde{x}'))|.
\end{equation}

We now see that \eqref{eq:hei-verifier} can be negative for $\tilde{G}$. The key is the edge weights of the edges, such as $(001,000)$ and $(001,011)$, which violate the adjacency condition. We call these edges \textit{virtual edges} and are defined in definition \ref{def:virtual-edges}. Without loss of generality, their edge weights can be chosen to be infinite. 

Physically, this means that the non-contraction map destroys the smoothness of the bulk geometry. In other words, there are no geodesics between any spatial points in the bulk subregion associated with $001$ and any spatial points in that associated with, for instance, $000$, see figure \ref{fig:mmi-noncontraction-surfaces}.

Then, the edge weights of the virtual edges give an infinite value to the second term in \eqref{eq:hei-verifier-noncontraction}. This results in the sign of \eqref{eq:hei-verifier-noncontraction} being negative, and thus violates the MMI. One should note that this does not mean that the MMI is a false inequality because there exists a contraction map.

\begin{table}[t]
\centering
\begin{tabular}{@{}lccclcccc@{}}
\hline
& \textbf{AB} & \textbf{AC} & \textbf{BC} &  & \textbf{A} & \textbf{B} & \textbf{C} & \textbf{ABC} \\ \cmidrule(r){1-4} \cmidrule(l){6-9} \textbf{O} & \textbf{0} & \textbf{0} & \textbf{0} &  & \textbf{0} & \textbf{0} & \textbf{0} & \textbf{0} \\
& 0 & 0 & 1 &  & 0 & 1 & 0 & 1 \\
& 0 & 1 & 0 &  & 0 & 0 & 0 & 1 \\
\textbf{C} & \textbf{0} & \textbf{1} & \textbf{1} &  & \textbf{0} & \textbf{0} & \textbf{1} & \textbf{1}\\
 & 1 & 0 & 0 &  & 0 & 0 & 0 & 1 \\
\textbf{B} & \textbf{1} & \textbf{0} & \textbf{1} &  & \textbf{0} & \textbf{1} & \textbf{0} & \textbf{1}  \\
\textbf{A} & \textbf{1} & \textbf{1} & \textbf{0} &  & \textbf{1} & \textbf{0} & \textbf{0} & \textbf{1}\\
 & 1 & 1 & 1 & \multirow{-9}{*}{\textbf{}} & 0 & 0 & 0 & 1 \\ \bottomrule
\end{tabular}
\caption{An example of non-contraction map $f: H_3 \rightarrow H_4$ whose boundary conditions give the MMI.}
\label{tab:non-cmap-ex}
\end{table}

In the next section, we will see that there is always some holographic geometry where the geodesic structure is altered (and, smoothness of the bulk manifold breaks down) when the corresponding map is a non-contraction map. Thus, for inequality candidates which do not have any contraction map, some holographic geometry will always violate it, and thus, the inequality must be false. Therefore, the existence of a contraction map is necessary for the proof of a valid HEI. We will formalize this statement in the following section \ref{sec:proof}.

\begin{figure}[t!]
    \centering
    \begin{subfigure}{0.3\textwidth}
        \centering
        \includegraphics[width=\linewidth]{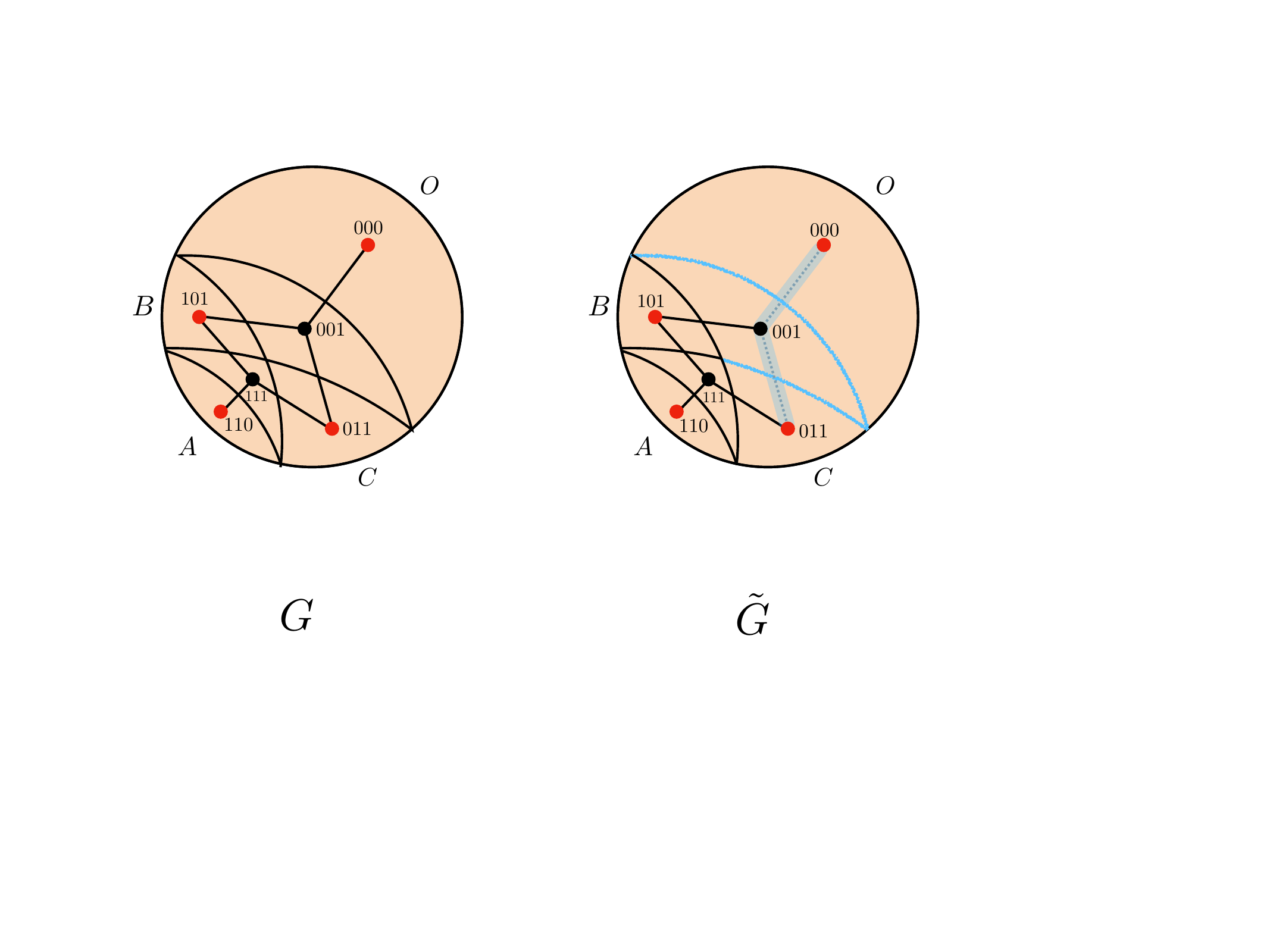}
        \subcaption{$G$}
        \label{fig:mmi-noncontraction-G}
    \end{subfigure}\hfill
    \begin{subfigure}{0.3\textwidth}
        \centering
        \includegraphics[width=\linewidth]{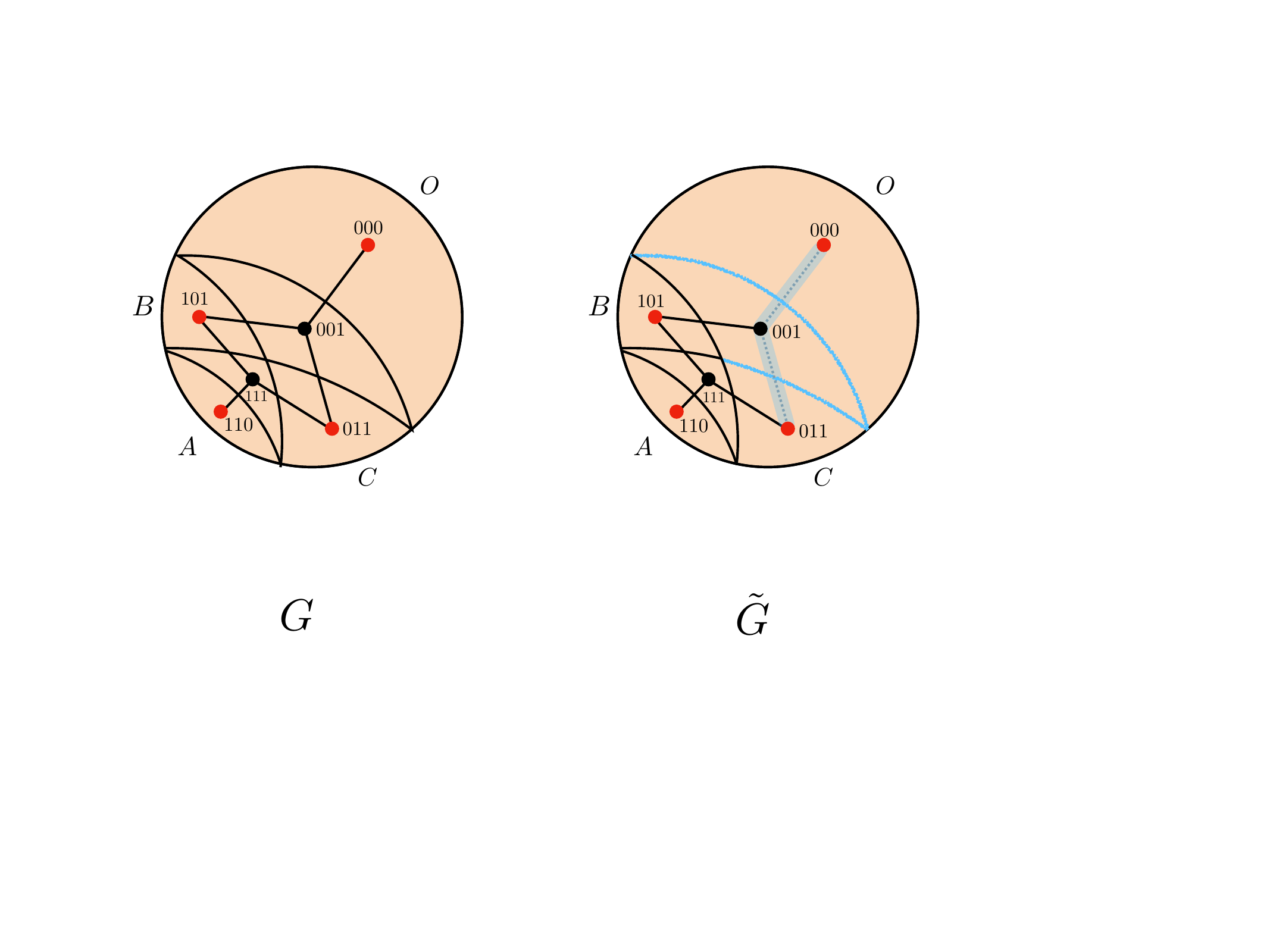}
        \label{fig:mmi-noncontraction-G}
        \subcaption{$\tilde{G}$}
    \end{subfigure}\hfill
    \caption{\small{The RT arrangements on AdS$_3$/CFT$_2$ corresponding to the subgraphs $G,\tilde{G}$: (a) The edge weights of all the edges are associated with the areas of the RT surfaces. (b) The edge weights of the edges colored black are associated with the areas of the corresponding RT surfaces, while the edge weights of the virtual edges highlighted blue are large or infinite. The two blue lines on the constant time slice represent the obstruction to the smoothness of the manifold.}}
    \label{fig:mmi-noncontraction-surfaces}
\end{figure}

\section{Main theorem} \label{sec:proof}

\begin{theorem}[Completeness of contraction map; HEI $\leftrightarrow$ contraction map]\label{thm:completeness}
    Given a candidate entropy inequality,
    \begin{equation}\label{eq:HEI}
        \sum_{i=1}^l \alpha_i S_{P_i} \geq \sum_{j=1}^r \beta_j S_{Q_j}, \; \alpha_i,\beta_j \in \mbZ_{+},
    \end{equation}
    it is a valid HEI if and only if a contraction map $f:\{0,1\}^M\to \{0,1\}^N$  exists (where $M=\sum_i \alpha_i$ and $N=\sum_j \beta_j$) satisfying homology conditions on the boundary\footnote{\label{ftn:rescaling_main_thm}A contraction map may depend on the particular presentation of a holographic entropy inequality. Nevertheless, modifying the presentation by merging or expanding terms with addition, such as \eqref{eq:genentineq-expand} or uniformly rescaling all coefficients, such as $c\sum_{i=1}^l \alpha_i S_{P_i} \geq  c \sum_{j=1}^r \beta_j S_{Q_j}$ for $\alpha_i,\beta_j,c \in \mbZ_{+}$, may yield a different contraction map that still proves the same inequality, because these operations do not alter its physical content. In this paper, we provide a proof with HEIs with unimodular coefficients. However, the proof extends straightforwardly to inequalities obtained by uniformly rescaling all coefficients. An alternative proof of the completeness is provided in  \cite{GHHS:preparation}.}.
    
\end{theorem}

From theorem \ref{thm:proofbycontraction} \cite{Bao:2015bfa}, the HEI \ref{eq:HEI} is valid if there is a corresponding contraction map. The main contribution of this paper is to prove the converse, i.e., theorem \ref{thm:HEI-to-contraction-map}.

\begin{theorem}[HEI$\rightarrow$contraction map]\label{thm:HEI-to-contraction-map}
    If the candidate HEI \ref{eq:HEI} is valid, there exists a contraction map.
\end{theorem}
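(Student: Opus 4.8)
The plan is to prove the contrapositive: assuming no contraction map exists for the candidate HEI~\eqref{eq:HEI}, we will construct an explicit holographic geometry (equivalently, a weighted partial cube with a choice of boundary vertices) that violates it. The starting point is the triality established in \cite{Bao-2024-towardscompleteness}: the search for a contraction map $f:\{0,1\}^M\to\{0,1\}^N$ with the prescribed boundary (homology) conditions is equivalent to the search for a graph contraction map $\Phi_f:H_M\to H_N$ with those same boundary vertices. The deterministic rules of \cite{Bao-2024-properties} partially fix the entries of any candidate map from the boundary conditions; the obstruction to completing it to a contraction map is precisely a pair (or family) of bitstrings $x,x'$ with $d_H(x,x')=1$ that are forced to images with $d_H(f(x),f(x'))\geq 2$. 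I would first formalize this: \textbf{Step 1} is to show that if no contraction map exists, then for \emph{every} completion $\tilde f$ of the boundary-fixed partial map there is at least one ``bad'' adjacent pair, i.e.\ the nontrivial preimage $\mathcal{P}(\Phi_{\tilde f}(H_M))$ is never a full $M$-cube but always a proper cubical subgraph on which the adjacency condition $d_H \adj d_G$ fails.

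\textbf{Step 2} is to turn a bad pair into a geometry. Pick a completion $\tilde f$ and a connected subset $X\subseteq\{0,1\}^M$ containing a bad adjacent pair $\tilde x,\tilde x'$ (the MMI example in section~\ref{sec:examples} is the prototype). The induced unit-weighted graph $G$ on $\iota_M(X)$ together with the boundary vertices is, by the geometry--graph duality of \cite{Bao:2015bfa}, realizable as the RT arrangement of some constant-time slice of a holographic geometry in some dimension, \emph{provided} $G$ is a partial cube; here $G$ is a subgraph of $H_M$ hence automatically a partial cube, so this is fine. The subtlety is that the map $\Phi_{\tilde f}$ sends the edge $(\iota_M(\tilde x),\iota_M(\tilde x'))$ to a self-loop (it lies in $Ker(\Phi_{\tilde f})$), so when we form the RHS arrangement by cut-and-glue this edge has no image: we introduce it as a \emph{virtual edge} (to be defined formally, as the excerpt anticipates with its reference to \texttt{def:virtual-edges}) and are free to assign it an arbitrarily large weight. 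Concretely, I would write the verifier quantity
\begin{equation*}
    \sum_{(\iota_M(x),\iota_M(x'))\in E_{G}}\bigl( d_H(x,x') - d_H(\tilde f(x),\tilde f(x')) \bigr)\,\bigl|(\iota_M(x),\iota_M(x'))\bigr|,
\end{equation*}
split off the contribution of each bad pair, where $d_H(\tilde f(\tilde x),\tilde f(\tilde x')) - d_H(\tilde x,\tilde x') = \kappa\geq 1$, and observe that sending that virtual edge weight to infinity drives the verifier negative. This is exactly the mechanism displayed in~\eqref{eq:hei-verifier-noncontraction}; the job is to show it goes through in general, not just for MMI, and that the resulting weighted graph still corresponds to a bona fide (if singular-in-the-limit) geometry obeying RT, or at least to a limiting family of smooth geometries on which the discrete inequality is violated by an arbitrarily large margin.

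\textbf{Step 3} is to handle the quantifier carefully: non-existence of a contraction map means \emph{no} completion works, so I must argue the violating geometry can be built from \emph{any} chosen completion, or alternatively exhibit a canonical one (e.g.\ a maximally-determined completion via the rules of \cite{Bao-2024-properties}). I also need the case analysis flagged in section~\ref{sec:examples}: the image graph $\Phi_{\tilde f}(H_M)$ may fail to be a partial cube, may be cubical-but-not-a-partial-cube, or may be a partial cube; in all three cases the relevant obstruction is on the \emph{preimage} side (the virtual edge), so the construction is uniform, but I would spell out that a partial-cube image is necessary-not-sufficient and that it is the preimage's failure of $d_H\adj d_G$ that does the work. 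Finally, combining with theorem~\ref{thm:proofbycontraction} (contraction map $\Rightarrow$ valid HEI) gives the biconditional of theorem~\ref{thm:completeness}.

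The main obstacle I expect is \textbf{Step 2's geometric realization}: showing that the weighted graph with a large-weight virtual edge genuinely arises as (a limit of) RT arrangements of smooth holographic geometries satisfying the RT formula — i.e.\ that ``breaking $d_H\adj d_G$'' really does correspond to an obstruction to bulk smoothness rather than merely to a combinatorial artifact — and that the violation survives as a strict inequality for honest smooth geometries rather than only in a degenerate limit. This is where the ``disclaimers'' in the introduction (existence of a graph model for any sensible geometry, and certain adjacency conditions taken as physical input) will have to be invoked; making those invocations precise, and isolating exactly which geometric lemma is needed, is the crux of the argument.
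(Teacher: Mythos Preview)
Your proposal is essentially the paper's own argument: contrapositive, then Lemma~\ref{lem:kernel-preimage} (non-contraction $\Rightarrow$ nontrivial kernel $\Rightarrow$ the adjacency condition $d_H\adj d_G$ fails on $\mP(\Phi_{\tilde f}(H_M))$), then Lemma~\ref{lem:nonhologrpahicgeometry} (replace the bad edges by virtual edges of infinite weight on the subgraph $\tilde G$, so the verifier~\eqref{eq:hei-verifier-noncontraction} is driven to $-\infty$ and, via geometry--graph duality, some geometry violates the inequality). One slip to correct: ``subgraph of $H_M$ hence automatically a partial cube'' is false in general (cubical graphs need not be partial cubes); the paper does not attempt to certify $G$ or $\tilde G$ as a partial cube either, but simply invokes the geometry--graph duality on $\tilde G$ under the disclaimers stated in the introduction --- which is exactly the physical-input issue you correctly isolate in your final paragraph.
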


Recall that in the holographic dictionary, the RT formula \cite{Ryu-2006-RTformula} relates the leading order entanglement entropy of a holographic quantum state on the boundary CFT to a minimal surface (RT surface) in the bulk. Therefore, a valid HEI holds for any holographic geometry obeying the RT formula. We take this as the definition of a HEI.

\begin{definition}\label{def:HEI}
    An entropy inequality is a valid HEI if and only if every holographic geometry obeying the RT formula satisfies the inequality.
\end{definition}

Theorem \ref{thm:HEI-to-contraction-map} is proved by contradiction. We assume, on the contrary, that there exists a valid HEI that does not have any corresponding contraction map. We first prove in lemma \ref{lem:nonhologrpahicgeometry} that, for any non-contraction map, there is at least one geometry that does not satisfy the inequality. Therefore, from definition \ref{def:HEI}, the inequality cannot be a valid HEI. Hence, theorem \ref{thm:HEI-to-contraction-map} is proved. Taking theorem \ref{thm:proofbycontraction} and theorem \ref{thm:HEI-to-contraction-map} together, we prove theorem \ref{thm:completeness}.

\subsection{Mathematical proof}

Given a convex combination of entanglement entropies in the inequality \eqref{eq:HEI},
\begin{equation}\label{eq:HEs}
    S_{[l]}: =\sum_{i=1}^l \alpha_i S_{P_i}, \; S_{[r]}: =\sum_{j=1}^r \beta_j  S_{Q_j}, \; \alpha_i,\beta_j \in \mbZ_{+},
\end{equation}
or
\begin{equation}
    S_M: =\sum_{u=1}^M  S_{L_u}, \; S_N: =\sum_{v=1}^N  S_{R_v}
\end{equation}
for the unimodular sums of \eqref{eq:HEs}.
We denote the corresponding discrete entropies on a graph by $S^*_{[l]},\;S^*_{[r]}$ and $S^*_M, \; S^*_N$, respectively. 

Let us now define a non-contraction map.
\begin{definition}[Non-contraction map]\label{def:non-contractionmap}
    A map $\tilde{f}: \{0,1\}^M\to \{0,1\}^N$ is a non-contraction map if, for some $x,x'\in \{0,1\}^M$,
    \begin{equation}\label{eq:non-contraction}
        d_H(x,x') < d_H(\tilde{f}(x),\tilde{f}(x')) .
    \end{equation}
\end{definition}

We assume the following.
\begin{assumption}\label{amp:non-contractionmap}
    There exists a valid HEI such that there is no corresponding contraction map\footnote{As mentioned in footnote \ref{ftn:rescaling_main_thm}, the non-existence of a contraction map means that none exists for any presentation of an inequality obtained by merging or expanding terms of the inequality and uniformly rescaling its coefficients.}.     
\end{assumption}
For a map to give a HEI, any subgraph of $H_M$, or equivalently, any RT arrangement on a holographic geometry, induced by the map $\tilde{f}$ satisfies the given HEI. This statement can be unraveled into two folds, similarly discussed in section \ref{sec:review-cmap}.

First, the choice of a map $\tilde{f}:\{0,1\}^M\to \{0,1\}^N$ fixes the image $Im(\tilde{f})\subseteq\{0,1\}^N$. The isometric condition $d_H=d_G$ with respect to the $J$-dimensional hypercubes $H_{J}=(V_{J},E_{J})$ for $J=M,N$ in \eqref{eq:dH_dG_notation} induces two objects: i) path isometries, $\iota_M:\{0,1\}^M \to V_M$, $\iota_N:\{0,1\}^N \to V_N$, and ii) the graph map $\Phi_{\tilde{f}}:H_M\to H_N$ associated with $\tilde{f}$. Similarly to the map $\tilde{f}$, its image graph $\Phi_{\tilde{f}}(H_M)\subseteq H_N$ is fixed by the map $\tilde{f}$. 

Second, if the map $\tilde{f}$ can give an HEI, any subgraph $G_M\subseteq H_M$ with a set of boundary vertices, which can be mapped to a connected subgraph $G_N \subseteq \Phi_{\tilde{f}}(H_M)$, should satisfy the inequality\footnote{This is due to the geometry-graph duality\cite{Bao:2015bfa}.}. The subgraphs $G_M$ are identified as the subgraphs of the non-trivial preimage $\mP(\Phi_{\tilde{f}}(H_M))\subseteq H_M$ of $\Phi_{\tilde{f}}(H_M)\subseteq H_N$. In this sense, we say that the subgraphs $G_M\subseteq \mP(\Phi_{\tilde{f}}(H_M)) $, or the corresponding RT arrangements on holographic geometries, are induced by the map $\tilde{f}$.

The strategy of the proof is to show that there exists at least one geometry with an RT arrangement induced by a non-contraction map $\tilde{f}$ that does not satisfy the inequality. This is true for any choice of non-contraction maps satisfying the homology conditions on the boundary of the HEI. Hence, it contradicts assumption \ref{amp:non-contractionmap}. Thus, theorem \ref{thm:HEI-to-contraction-map} is proved.

Now, we begin the proof. From the isometric condition $d_H=d_G$ with respect to $J$-dimensional hypercubes $H_{J}=(V_{J}, E_{J})$ for $J=M,N$, we have the path isometries $\iota_M:\{0,1\}^M\to V_M$ and $\iota_N:\{0,1\}^N\to V_N$ such that
\begin{equation}
\begin{split}
    d_H(x,x') &= d_G(\iota_M(x),\iota_M(x')), \; \forall x,x' \in\{0,1\}^M\\
    d_H(y,y') &= d_G(\iota_N(y),\iota_N(y')), \; \forall y,y' \in\{0,1\}^N.\\
\end{split}
\end{equation}
Then, we can construct a graph map from the non-contraction map $\tilde{f}$ with the isometries defined above, i.e., the graph map $\Phi_{\tilde{f}}:=(\tilde{\phi}^V,\tilde{\phi}^E)$ consists of a vertex map $\tilde{\phi}^V$ and an edge map $\tilde{\phi}^E$ defined as
\begin{equation}
    \tilde{\phi}^V\circ \iota_M := \iota_N \circ \tilde{f},\; \tilde{\phi}^E(\cdot,\cdot):= (\tilde{\phi}^V\circ \iota_M(\cdot),\tilde{\phi}^V\circ \iota_M(\cdot)).
\end{equation}

For $\tilde{x},\tilde{x}'\in\{0,1\}^M$ satisfying the non-contraction condition \eqref{eq:non-contraction},
\begin{equation}
    d_H(\tilde{x},\tilde{x}') <d_H(\tilde{f}(\tilde{x}),\tilde{f}(\tilde{x}')),
\end{equation}
we have
\begin{equation}\label{eq:graph-noncontraction}
    d_G(\iota_M(\tilde{x}),\iota_M(\tilde{x}'))<d_G(\tilde{\phi}^V \circ \iota_M(\tilde{x}),\tilde{\phi}^V\circ \iota_M(\tilde{x}')) .
\end{equation}

Now, we state the properties of the graph non-contraction maps.
\begin{lemma}\label{lem:kernel-preimage}
    For a graph non-contraction map $\Phi_{\tilde{f}}:H_M \to H_N$, its kernel $Ker(\Phi_{\tilde{f}})$ is non-trivial. Then, $\mP(\Phi_{\tilde{f}}(H_M))$ is a proper subgraph of the hypercube $H_M$, i.e., $\mP(\Phi_{\tilde{f}}(H_M)) \subset H_M$. Equivalently, $\mP(\Phi_{\tilde{f}}(H_M))$ does not satisfy $d_H\adj d_G$.
\end{lemma}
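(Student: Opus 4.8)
The plan is to establish the three claimed equivalences in order, peeling them apart from the hypothesis that $\Phi_{\tilde f}$ is induced by a non-contraction map $\tilde f$.

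\textbf{Step 1: Non-triviality of the kernel.} Since $\tilde f$ is a non-contraction map, pick $x,x'\in\{0,1\}^M$ realizing $d_H(x,x')<d_H(\tilde f(x),\tilde f(x'))$, and by iterating along a Hamming-geodesic from $x$ to $x'$ we may assume $d_H(x,x')=1$ while $d_H(\tilde f(x),\tilde f(x'))\ge 2$. Let $e=(\iota_M(x),\iota_M(x'))$ be the corresponding edge of $H_M$. The edge map $\tilde\phi^E$ sends $e$ to the pair $(\tilde\phi^V\iota_M(x),\tilde\phi^V\iota_M(x'))=(\iota_N\tilde f(x),\iota_N\tilde f(x'))$, which is a pair of vertices at graph distance $\ge 2$ in $H_N$ and hence is \emph{not} an edge of $H_N$. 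By the convention recorded in Definition \ref{def:preimage-graph}, such an $e$ is mapped either to a self-loop or to no edge at all — it cannot be a genuine edge of the image graph $\Phi_{\tilde f}(H_M)$. Thus $e$ contributes to $Ker(\tilde\phi^E)$ (equivalently, after the cut-and-glue identification its endpoints collapse, or the edge is deleted), so $Ker(\Phi_{\tilde f})$ is non-trivial. This is the conceptual heart of the lemma; everything else is bookkeeping about what "non-trivial kernel" forces on the preimage.

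\textbf{Step 2: The nontrivial preimage is a proper subgraph.} By definition \eqref{eq:nontrivial_preimage}, $\mP(\Phi_{\tilde f}(H_M))=PreIm(\Phi_{\tilde f}(H_M))\setminus Ker(\Phi_{\tilde f})$. The vertex set of $H_M$ surjects onto $Im(\tilde f)$ (every vertex has an image), so no vertices are removed; but the edge $e$ from Step 1 lies in $Ker(\tilde\phi^E)$ and therefore is removed from the edge set. Hence $\mP(\Phi_{\tilde f}(H_M))$ has strictly fewer edges than $H_M$ on the same (or smaller) vertex set, so $\mP(\Phi_{\tilde f}(H_M))\subsetneq H_M$, i.e. it is a proper subgraph.

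\textbf{Step 3: Equivalence with the failure of $d_H\adj d_G$.} Recall from the end of section \ref{sec:review-cmap} that if $\tilde f$ were contractive, $Ker(\Phi_{\tilde f})=(\emptyset_V,\emptyset_E)$ and $\mP(\Phi_{\tilde f}(H_M))$ would be the full $M$-cube, on which $d_H\adj d_G$ holds trivially. Conversely, suppose $\mP(\Phi_{\tilde f}(H_M))$ fails $d_H\adj d_G$: then there are $x,x'$ with $d_H(x,x')=1$ whose images $\iota_M(x),\iota_M(x')$ are not joined by an edge in $\mP$, which forces the corresponding edge of $H_M$ into $Ker(\tilde\phi^E)$, so $\tilde\phi^E$ is not adjacency-preserving, and by the contrapositive of Theorem \ref{thm:proofbycontraction}'s setup the associated $\tilde f$ must be non-contractive — giving the reverse implication. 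For the forward implication, the edge $e$ exhibited in Step 1 is precisely a pair at Hamming distance $1$ whose endpoints are non-adjacent in $\mP(\Phi_{\tilde f}(H_M))$, so the adjacency condition \eqref{eq:adjacencycondition} fails there. Hence "$Ker(\Phi_{\tilde f})$ non-trivial", "$\mP(\Phi_{\tilde f}(H_M))\subsetneq H_M$", and "$\mP(\Phi_{\tilde f}(H_M))$ does not satisfy $d_H\adj d_G$" are all equivalent, completing the lemma.

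The main obstacle I anticipate is Step 1's reduction to a \emph{unit} Hamming-distance violation: one must check that, along a Hamming-geodesic $x=z_0,z_1,\dots,z_k=x'$ with $d_H(z_i,z_{i+1})=1$, the strict inequality $d_H(x,x')<d_H(\tilde f x,\tilde f x')$ cannot be "spread out" so that every consecutive pair individually satisfies $d_H(\tilde f z_i,\tilde f z_{i+1})\le 1$ — but if it could, the triangle inequality would give $d_H(\tilde f x,\tilde f x')\le k=d_H(x,x')$, a contradiction; so some consecutive pair genuinely violates adjacency. The rest is careful unwinding of Definitions \ref{def:preimage-graph}–\ref{def:kernel-graph}.
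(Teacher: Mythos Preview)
Your proof is correct and follows the same approach as the paper's: exhibit an edge of $H_M$ where the non-contraction condition fails, observe that its image is not an edge of $H_N$ so it lies in $Ker(\tilde\phi^E)$, and conclude that $\mP(\Phi_{\tilde f}(H_M))$ is a proper subgraph violating $d_H\adj d_G$. Your triangle-inequality reduction to $d_H(x,x')=1$ actually fills in a step the paper leaves implicit; the appeal to Theorem~\ref{thm:proofbycontraction} in the reverse implication of your Step~3 is misplaced (that theorem concerns proving HEIs, not the adjacency condition), but that direction is not required for the lemma as stated.
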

\begin{proof}

    The kernel of the vertex map is trivial by definition \ref{def:non-contractionmap}.

    Consider a non-contraction map $\tilde{f}:\{0,1\}^M\to\{0,1\}^N$ for $M,N\in \mbZ$. Then, by definition \ref{def:non-contractionmap}, there exists an edge $\tilde{e}=(\iota_M(\tilde{x}), \iota_M(\tilde{x}')) \in E_M$ such that
    \begin{equation}
        d_H(\tilde{x},\tilde{x}') <d_H(\tilde{f}(\tilde{x}),\tilde{f}(\tilde{x}')).
    \end{equation}
    As in \eqref{eq:graph-noncontraction}, we have
    \begin{equation}
        d_G(\iota_M(x), \iota_M(x')) =1 < d_G(\iota_N\circ \tilde{f}(x), \iota_N\circ \tilde{f}(x')) =  d_G( \tilde{\phi}^V\circ \iota_M(x), \tilde{\phi}^V\circ \iota_M(x')) .
    \end{equation}
    This implies that the edge $\tilde{\phi}^E(\tilde{e})=( \tilde{\phi}^V\circ \iota_M(x), \tilde{\phi}^V\circ \iota_M(x')) $ is a null edge, i.e.,
    \begin{equation}
        \tilde{\phi}^E(\tilde{e})=\emptyset_E \in E_N.
    \end{equation}
    Thus, $\tilde{e} \in Ker(\tilde{\phi}^E)$. We denote any edge in $Ker(\tilde{\phi}^E)$ by $\tilde{e}$ from now on. 

    From definition \ref{def:preimage-graph} and \ref{def:kernel-graph}, the vertex set $\tilde{V}_M$ and the edge set $\tilde{E}_M$ of $\mP(\Phi_{\tilde{f}}(H_M))=(\tilde{V}_M,\tilde{E}_M)$ are
    \begin{equation}\label{eq:tilde-vertex-edge-set}
    \begin{split}
        \tilde{V}_M = V_M \setminus Ker(\tilde{\phi}^V),\\
        \tilde{E}_M = E_M \setminus Ker(\tilde{\phi}^E).
    \end{split}
    \end{equation}
    In our case, $\tilde{V}_M  = V_M$ because $Ker(\tilde{\phi}^V)=\{\emptyset_V\}$. With the non-trivial $Ker(\tilde{\phi}^E)$, we have 
    \begin{equation}
        \mP(\Phi_{\tilde{f}}(H_M))= H_M \setminus Ker(\Phi_{\tilde{f}})=(V_M,\tilde{E}_M) \subset H_M.
    \end{equation}
    It is straightforward to see that $\mP(\Phi_{\tilde{f}}(H_M))$ violates the adjacency condition $d_H\adj d_G$.

\end{proof}

For later purposes, we will now define \textit{virtual edges}. 
\begin{definition}[Virtual edges]\label{def:virtual-edges}
    Consider a graph $G=(V,E)$. We define a virtual edge $0_E$ between non-adjacent vertices $v,v'\in V$. That is,
    \begin{equation}\label{eq:virtual-edge}
        0_E: = (v,v') 
    \end{equation}
    for $v,v'\in V$ such that
    \begin{equation}
        d_G(v,v') >1.
    \end{equation}
    We set its edge weight $|0_E|$ infinite \footnote{In principle, the edge weight $|0_E|$ can be set to an arbitrarily large but finite value so that it is larger than the largest distance in $G$. For a weighted graph $G$, this means that $|0_E|$ is larger than its largest distance, i.e., the largest sum of edge weights over all paths. Without loss of generality, we set it to be infinite.}, i.e.,
    \begin{equation}\label{eq:virtual-edgeweight}
        |0_E|:=\infty.
    \end{equation}
\end{definition}

We can replace the null edges of a graph between non-adjacent vertices with virtual edges without changing the graph structures\footnote{Graph distance between any pairs of vertices is preserved.}. We assume from now on that the null edges of any graph appearing below is replaced by the virtual edges.

Especially, in our case, there are virtual edges in $\tilde{E}_M$ of $\mP(\Phi_{\tilde{f}}(H_M))$ that used to be the edges $\tilde{e} \in Ker(\tilde{\phi}^E)\subset E_M$ of $H_M$ (see figure \ref{fig:mmi-noncontraction-graph}). We denote those as $\tilde{0}_{\tilde{E}}$. To stress this point, we write the replacement by
\begin{equation}\label{eq:tilde-edge-virtual-edges}
     \tilde{E}_M = E_M \setminus Ker(\tilde{\phi}^E) \cup \{\tilde{0}_{\tilde{E}}\}.
\end{equation}

\begin{lemma}\label{lem:failing-inequality}
    Consider a non-contraction map $\tilde{f}$ satisfying the homology conditions for an inequality of discrete entropy inequality
    \begin{equation}
        S^*_M \geq S^*_N.
    \end{equation}
    Then, there is a unit-weighted subgraph $\tilde{G}_X\subseteq\mP(\Phi_{\tilde{f}}(H_M))$ that does not satisfy $d_H\adj d_G$ and gives 
    \begin{equation}
        S^*_M <  S^*_N
    \end{equation}
    with any choice of finite edge weights.
\end{lemma}
\begin{proof}
    $\tilde{G}_X:=\{\tilde{V}_X, \tilde{E}_X\}\subseteq\mP(\Phi_{\tilde{f}}(H_M))$ can be defined as 
    \begin{equation}
        \tilde{V}_X:= \{\tilde{\iota}_M(x)| \forall x \in X_M\},\; \tilde{E}_X:=\{(\tilde{\iota}_M(x),\tilde{\iota}_M(x'))|x,x'\in X_M\},
    \end{equation}
    where $X_M \in \{0,1\}^M$. Since $\tilde{G}_X$ does not satisfy $d_H\adj d_G$, we have a map $\tilde{\iota}_M:\{0,1\}^M \to \tilde{V}_M$ such that there is at least one virtual edge $\tilde{0}_{\tilde{E}}= (\tilde{\iota}_M(\tilde{x}),\tilde{\iota}_M(\tilde{x}') )$ for $\tilde{x}, \tilde{x}'\in X_M$ satisfying
    \begin{equation}\label{eq:HneqG-edge}
        d_H(\tilde{x},\tilde{x}') < d_G(\tilde{\iota}_M(\tilde{x}),\tilde{\iota}_M(\tilde{x}')).
    \end{equation}
    Without loss of generality, suppose there is a virtual edge $\tilde{0}_{\tilde{E}} = (\iota_M(\tilde{x}),\iota_M(\tilde{x}'))$ of $\tilde{G}_X$ such that 
    \begin{equation}\label{eq:k-non-contractive}
        d_H(\tilde{x},\tilde{x}') + \kappa =d_H(\tilde{f}(\tilde{x}),\tilde{f}(\tilde{x}'))
    \end{equation}
    for some positive integer $\kappa>0$. 

    After choosing edge weights on $\tilde{G}_X$, the inequality for $\tilde{G}_X=(\tilde{V}_X,\tilde{E}_X)$ can be computed by 
    \begin{equation}
    \begin{split}
        S^*_M-\mbS^*_N 
        =&\sum_{(\iota_M(x),\iota_M(x'))\in \tilde{E}_X}\Big( d_H(x,x') - d_H(\tilde{f}(x) ,\tilde{f}(x')) \Big) |(\iota_M(x),\iota_M(x'))|\\
        & \qquad \qquad \qquad +\Big( d_H(\tilde{x},\tilde{x}') - d_H(\tilde{f}(\tilde{x}) ,\tilde{f}(\tilde{x}')) \Big)|\tilde{0}_{\tilde{E}}|\\
        =&\sum_{(\iota_M(x),\iota_M(x'))\in \tilde{E}_X}\Big( d_H(x,x') - d_H(\tilde{f}(x) ,\tilde{f}(x')) \Big) |(\iota_M(x),\iota_M(x'))|\\
        &\qquad \qquad \qquad -\kappa |\tilde{0}_{\tilde{E}}|.
    \end{split}
    \end{equation}
    We used \eqref{eq:k-non-contractive}, i.e., $d_H(\tilde{x},\tilde{x}') -d_H(\tilde{f}(\tilde{x}),\tilde{f}(\tilde{x}'))= - \kappa $, to obtain the coefficient $\kappa$ of the second term. The first sum is positive semi-definite and finite. However, the second term is infinite. Hence, we have 
    \begin{equation}
        S^*_M-\mbS^*_N = -\infty <0.
    \end{equation}
    
    The optimization of the edge weights cannot remove the presence of the virtual edge. Therefore, the graph $\tilde{G}_X$ cannot give the inequality, i.e.,
    \begin{equation}
        S^*_M-S^*_N= -\infty < 0.
    \end{equation}
    
\end{proof}

\begin{lemma}\label{lem:nonhologrpahicgeometry}
    Consider a candidate HEI. For a non-contraction map $\tilde{f}$ satisfying the homology conditions on the boundary of the HEI, there exists at least one geometry that does not satisfy the HEI.
\end{lemma}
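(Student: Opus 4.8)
\textbf{Proof plan for Lemma \ref{lem:nonhologrpahicgeometry}.}

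The plan is to leverage the structural fact established in Lemma \ref{lem:kernel-preimage}: the nontrivial preimage $\mP(\Phi_{\tilde f}(H_M))$ is a proper cubical subgraph of $H_M$, obtained by deleting a nonempty set of edges $\{\tilde e\}=Ker(\tilde\phi^E)$, which after the replacement become virtual edges $\tilde 0_{\tilde E}$ of infinite (or arbitrarily large) weight. First I would pick one such virtual edge $\tilde e=(\iota_M(\tilde x),\iota_M(\tilde x'))$ with $d_H(\tilde x,\tilde x')=1$ but $d_H(\tilde f(\tilde x),\tilde f(\tilde x'))=1+\kappa$ for some $\kappa\in\mbZ_+$, and then exhibit a connected subgraph $\tilde G_M\subseteq \mP(\Phi_{\tilde f}(H_M))$ containing this edge, equipped with a proper choice of boundary vertices consistent with the homology conditions of the HEI. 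By the geometry-graph duality of \cite{Bao:2015bfa}, $\tilde G_M$ (with edge weights) induces a holographic geometry with an RT arrangement, whose boundary entropies are exactly the LHS terms $S^*_{L_u}$, while the image $\Phi_{\tilde f}(\tilde G_M)\subseteq\Phi_{\tilde f}(H_M)$ supplies the RHS terms.

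Next I would compute the HEI verifier quantity, exactly as in \eqref{eq:hei-verifier}--\eqref{eq:hei-verifier-noncontraction}:
\begin{equation}
    \Delta:=\sum_{u=1}^M S^*_{L_u} - \sum_{v=1}^N \mbS^*_{R_v} = \sum_{(\iota_M(x),\iota_M(x'))\in E_{\tilde G_M}}\Big( d_H(x,x') - d_H(\tilde f(x),\tilde f(x')) \Big)\, |(\iota_M(x),\iota_M(x'))|.
\end{equation}
The contribution of every ordinary edge $(x,x')$ is $(1 - d_H(\tilde f(x),\tilde f(x')))|(x,x')| \le 0$ times a \emph{finite} weight, while the virtual edge $\tilde e$ contributes $(1 - (1+\kappa))|\tilde 0_{\tilde E}| = -\kappa\,|\tilde 0_{\tilde E}|$ with $|\tilde 0_{\tilde E}|$ infinite (or, finitely, chosen larger than the total finite contribution). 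Hence I would set the ordinary edge weights to any fixed finite values and the virtual-edge weight large enough that $\Delta<0$, which forces $\sum_u S^*_{L_u} < \sum_v \mbS^*_{R_v}$; since the true (optimized) RHS satisfies $\sum_v \mbS^*_{R_v}\ge \sum_v S^*_{R_v}$ only in the wrong direction for us, I would instead argue directly: the geometry dual to $\tilde G_M$ has LHS entropies computed by min-cut on $\tilde G_M$, and the RHS entropies are min-cuts for the regions $A_k$ in \emph{that same} geometry; because the virtual edge carves the graph into pieces with no short connection, the min-cut for at least one RHS region $Q_j$ picks up the enlarged distance $1+\kappa$, making $\sum_v S^*_{R_v} > \sum_u S^*_{L_u}$ for a suitable weight assignment. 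Physically this is the statement that the virtual edge obstructs geodesics in the bulk slice (figure \ref{fig:mmi-noncontraction-surfaces}), altering the RT surface structure so the deformed surfaces are strictly longer.

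The main obstacle I expect is the careful bookkeeping of the second step: one must verify that the subgraph $\tilde G_M$ together with the induced boundary vertices genuinely realizes a holographic geometry obeying the RT formula — i.e. that the min-cut values on $\tilde G_M$ really reproduce $\{S_{L_u}\}$ and that the cuts for $\{S_{Q_j}\}$ in the \emph{same} weighted graph are governed by the image graph $\Phi_{\tilde f}(\tilde G_M)$ with the virtual edge present — rather than merely writing down an algebraic inequality for the verifier. This requires invoking the geometry-graph duality of \cite{Bao:2015bfa} together with the disclaimer (stated in the introduction) that any sensible holographic geometry admits such a graph realization, and checking that no competing cut avoids the virtual edge at lower cost once its weight is taken large. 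Once that is in place, monotonicity of min-cut in the edge weights does the rest: increasing $|\tilde 0_{\tilde E}|$ drives the RHS min-cuts up without bound while the LHS stays fixed, so the inequality is violated in the limit, and hence by definition \ref{def:HEI} the candidate is not a valid HEI.
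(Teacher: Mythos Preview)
Your plan tracks the paper's proof closely through the first half: invoke Lemma~\ref{lem:kernel-preimage} to obtain a nontrivial kernel, single out a virtual edge $\tilde 0_{\tilde E}=(\iota_M(\tilde x),\iota_M(\tilde x'))$ with $d_H(\tilde x,\tilde x')=1$ and $d_H(\tilde f(\tilde x),\tilde f(\tilde x'))=1+\kappa$, build a subgraph $\tilde G\subseteq\mP(\Phi_{\tilde f}(H_M))$ containing it with proper boundary vertices, and evaluate the verifier \eqref{eq:hei-verifier}--\eqref{eq:hei-verifier-noncontraction}. Up to that point you and the paper agree.

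The divergence --- and the gap --- is in your final accounting. You correctly observe that $\sum_v\mbS^*_{R_v}\ge\sum_v S^*_{R_v}$ points the wrong way for deducing $S^*_M<S^*_N$ from $\Delta<0$, and you try to repair this by claiming that ``increasing $|\tilde 0_{\tilde E}|$ drives the RHS min-cuts up without bound while the LHS stays fixed.'' This asymmetry is not there. The virtual edge has $d_H(\tilde x,\tilde x')=1$, so it lies in exactly one LHS Winkler class $\Theta_u$; if the min-cut for $L_u$ is the bit-$u$ cut, it must cross $\tilde 0_{\tilde E}$ and hence $S^*_{L_u}$ grows with $|\tilde 0_{\tilde E}|$ just as any RHS term would. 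Conversely, if you allow the min-cut to reroute around the expensive edge (placing both bulk endpoints on the same side), that rerouting is equally available to every RHS cut, and \emph{both} sides saturate at finite values as $|\tilde 0_{\tilde E}|\to\infty$. There is no mechanism in your plan that forces the virtual edge into an RHS min-cut while excluding it from the LHS one, so ``monotonicity of min-cut does the rest'' does not close the argument.

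The paper does not attempt your min-cut rerouting. It identifies the discrete entropies on $\tilde G$ directly with the Winkler-class sums --- writing $S^*_{L_u}=\sum_{e\in\Theta_u}|e|+\sum_{\tilde 0_{\tilde E}\in\Theta_u}|\tilde 0_{\tilde E}|=\infty$ --- and then reads off the failure from the verifier itself: the first (finite, non-negative) sum minus $\kappa\sum|\tilde 0_{\tilde E}|$ is $-\infty$, whence $S^*_M-S^*_N=-\infty$. In other words, the paper treats the RT arrangement on $\tilde G$ as fixed (the LHS surfaces \emph{are} the Winkler classes by construction), so the verifier deficit \emph{is} the inequality failure on the induced geometry, with no separate re-optimization step. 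If you want your plan to match the paper, drop the min-cut rerouting paragraph and argue instead that on the graph $\tilde G$ the LHS discrete entropies are by construction the Winkler-class sums of the induced RT arrangement, so that the negative verifier directly exhibits a geometry violating the candidate inequality.
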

\begin{proof}

    Consider $X_M\subseteq \{0,1\}^M$ and a path isometries $\iota_X:\{0,1\}^M\to V_M$ which induces $G_X\subset H_M$ and a map $\tilde{\iota}_X$ which induces  $\tilde{G}_X\subset \mP(\Phi_{\tilde{f}}(H_M))$ violating the adjacency condition $d_H\adj d_G$. In particular,
    \begin{equation}\label{eq:GtildeG}
    \begin{split}
        G_X &= (V_X,E_X)\subseteq H_M\\
        \tilde{G}_X &=(\tilde{V}_X,\tilde{E}_X)\subseteq \mP(\Phi_{\tilde{f}}(H_M))\subset H_M\\
    \end{split}
    \end{equation}
    where
    \begin{equation}
        V_X:= \{\iota_M(x)|x\in X_M\},\; \tilde{V}_X:= \{\tilde{\iota}_M(x)|x\in X_M\}
    \end{equation}
    and 
    \begin{equation}
        E_X:= \{(\iota_M(x),\iota_M(x'))| x,x'\in X_M\},\; 
        \tilde{E}_X:= \{(\tilde{\iota}_M(x),\tilde{\iota}_M(x'))| x,x'\in X_M\}\\
    \end{equation}
    Note that $G_X \nsubseteq  \mP(\Phi_{\tilde{f}}(H_M))$. Moreover, one can write
    \begin{equation}
        \tilde{V}_X = V_X,\;\tilde{E}_X = E_X\setminus Ker(\tilde{\phi}^E)\cup\{\tilde{0}_{\tilde{E}}\}.
    \end{equation}
    as in \eqref{eq:tilde-vertex-edge-set} and \eqref{eq:tilde-edge-virtual-edges}.

    The edges $\tilde{e}\in Ker(\tilde{\phi}^E)$ in $G_X$ is replaced with $\tilde{0}_{\tilde{E}}$ in $\tilde{G}_X$, see figure \ref{fig:mmi-noncontraction-graph}. These are the virtual edges that satisfy
    \begin{equation}\label{eq:HneqG-edge}
        d_H(\tilde{x},\tilde{x}') =1 < d_G(\tilde{\iota}_M(\tilde{x}),\tilde{\iota}_M(\tilde{x}')).
    \end{equation}

    To compute the discrete entropy $S_{L_u}^*$ for $G_X$ and $\tilde{G}_X$, we elucidate two properties of the graph, i.e., the Winkler classes in definition \ref{def:winkler-relation-class} and the edge weights. 
    
First, the homology conditions and the bitstrings in $X$ determine the Winkler classes $\Theta_u[e]$, or $\Theta_u$, for $u=1,\cdots,M$ of the graphs. This requires that $\tilde{e} = (\iota_M(\tilde{x}),\iota_M(\tilde{x}')) \in \Theta_u$ for $G$ and $\tilde{0}_{\tilde{E}} =(\tilde{\iota}_M(\tilde{x}),\tilde{\iota}_M(\tilde{x}')) \in \Theta_u$ for $\tilde{G}_X$.    
Second, the weights of the edges (except the virtual edges) in the graphs $G_X$ and $\tilde{G}_X$ have been chosen to be unity, so far. Now we will introduce the edge weights associated with the RT arrangement to these graphs. Let us assume that the edges of $G_X$ and those of $\tilde{G}_X$ have the same edge weights except $\tilde{e}$ and $\tilde{0}_{\tilde{E}}$. The choice of the edge weights fixes the RT arrangement. In particular, the edge weight of the edges in each class $\Theta_u$ is fixed by the choice and, thus, so is the area of the RT surface corresponding to the class.

Then, with the choice of edge weights $|e|,|\tilde{e}|$ for $e,\tilde{e}\in E_X$ for $G_X$, the discrete entropy $S_{L_u}^*$ of $G_X$ is given by 
\begin{equation}
    S_{L_u}^* = \sum_{e \in \Theta_u} |e| + \sum_{\tilde{e} \in \Theta_u} |\tilde{e}|,
\end{equation}
which is positive semi-definite and can be tuned to be finite.
However, the discrete entropy $S_{L_u}^*$ for $\tilde{G}_X$ is divergent by definition \eqref{eq:virtual-edgeweight}, i.e.
\begin{equation}
    S_{L_u}^* = \sum_{e \in \Theta_u} |e| + \sum_{\tilde{0}_{\tilde{E}} \in \Theta_u} |\tilde{0}_{\tilde{E}}| = \infty.
\end{equation}
Hence, any geometry realized by $\tilde{G}_X$ cannot give the proper holographic entanglement entropy.

From lemma \ref{lem:failing-inequality}, we see that the graph $\tilde{G}_X$ cannot satisfy the inequality, i.e.,
\begin{equation}
    S^*_M-S^*_N< 0.
\end{equation}

In short, for any non-contraction map, there is at least one subgraph $\tilde{G}_X$ in the preimage, which does not satisfy the adjacency condition $d_H\adj d_G$ and the given holographic (discrete) entropy inequality. This implies that any geometry induced from the subgraph $\tilde{G}_X$ with a choice of its edge weights on the equivalence classes $\Theta_u$ cannot satisfy the HEI. Therefore, lemma \ref{lem:nonhologrpahicgeometry} is proved.

\end{proof}

From lemma \ref{lem:nonhologrpahicgeometry} and definition \ref{def:HEI}, any non-contraction map cannot give the HEI. Hence, theorem \ref{thm:HEI-to-contraction-map} is proved. Theorem \ref{thm:HEI-to-contraction-map} and theorem \ref{thm:proofbycontraction}
prove the completeness of the contraction map, i.e., theorem \ref{thm:completeness}.

\section{Discussion}
\label{sec:discussion}

\subsection{Completeness of the algorithmic method to generate all the HEIs}

In \cite{Bao-2024-towardscompleteness}, we have proposed an algorithm that can generate HEIs by generating all possible partial cubes under graph contraction maps. By construction, it respects the adjacency condition $d_H\adj d_G$. In that work, we conjectured that this method can generate all the HEIs. In the current work, we prove this result in theorem \ref{thm:completeness}. We have not, however, proved the optimality of the algorithm in \cite{Bao-2024-towardscompleteness}, and further optimizations upon the algorithm are certainly available. Furthermore, we have not completed the full classifications of HEIs, only giving a constructive method for generating them. We will leave the optimization of those methods and this full classification for future work.

Geometrically, it could be interesting to understand what conditions on the geometry must be relaxed in order for non-contracting inequalities to hold. For example, if a GHZ state were desired as the entanglement pattern between multiple CFTs such as that proposed in \cite{susskind2014ereprghzconsistencyquantum}, what manifold conditions would have to be relaxed for this to be permitted?

\subsection{Interpretations of the HEIs}

It largely remains opaque how to interpret the holographic entanglement entropies heretofore derived; what consequences do they have for holography, or for the subset of quantum states that obey them more broadly? It is known that the cyclic family of inequalities lower bounds the multipartite entanglement of purification and therefore has an operational interpretation \cite{bao2019conditional}, but the operational interpretation of the other ones remain opaque. It is possible that they can serve as finer-grained diagnostics for chaos as in \cite{Hosur_2016} in the context of the tripartite information, or as generalizations of the topological entanglement entropy as investigated in \cite{Naskar:2024mzi}, but neither approach has yet reached full fruition.

It also remains possible that all holographic entanglement inequalities with either fixed party number or fixed number of LHS terms are derived from an inequality with larger numbers of either quantity, where the entropic contributions of specific subsets of parties have been set to zero. This would certainly be conceptually pleasing, as it would lend at least an alternate mathematical unification of the holographic entanglement entropy inequalities.

\subsection{Entanglement wedge nesting and geometry-graph duality}

The geometry-graph duality was proposed and proved in \cite{Bao:2015bfa}. They studied two types of graphs, namely trivalent graphs and partial cubes, in the duality. We utilized partial cubes in this work. They are distinct from the trivalent graphs used in \cite{Bao:2015bfa,HernandezCuenca:2019wgh,AVIS202316,Hernandez-Cuenca:2023iqh,He:2019repackaged,Hernandez-Cuenca:2022marginal} in the sense that the vertices of the trivalent graphs are not labeled with the bitstrings of a contraction map. In contrast, the partial cubes are constructed from the contraction maps as discussed in \cite{Bao-2024-towardscompleteness}. Since the bitstrings of the contraction maps are expected to follow the inclusion/exclusion principle based on the entanglement wedge nesting(EWN) relations, one would expect the EWN relations in the RT arrangements induced from the partial cubes. However, there are bitstrings that do not seem to satisfy the EWN relations and have been known as \textit{unphysical bitstrings} \cite{Avis:2021xnz,Bao-2024-properties}. Such vertices do not enclose any non-zero volume in the bulk.

Recently, \cite{Czech:2025tds} studied and explored the possibility of how the EWN relations constrain the contraction maps. There are three types of EWN violation: The EWN violations i) within a set of bitstrings associated with the LHS of an inequality \cite{Avis:2021xnz,Bao-2024-properties}, ii) that of the RHS of the inequality, and iii) between the set of bitstrings of the LHS and that of the RHS. In particular, the authors in \cite{Czech:2025tds} mainly considered the effect of the third type in the contraction map method. In our case, the third type is not directly relevant to our proof because the key is whether a map is contractive or non-contractive, or equivalently, whether it satisfies the adjacency condition. In other words, it does not matter how many EWN violations occurred.

However, it would still be interesting to explore the properties of a graph contraction map $\Phi_f$ involved with the violations of the third type. This could help us develop an efficient algorithm to find tight HEIs and demystify interpretations of holographic entanglement entropies. We leave further investigations for the future.

\section*{Acknowledgments}
We would like to thank Sergio Hernandez-Cuenca, Bart\l omiej Czech, Hirosi Ooguri, and Michael Walter for helpful comments. N.B. is supported by Northeastern University Department of Physics and by Brookhaven National Laboratory, as well as by the U.S Department of Energy ASCR EXPRESS grant, Novel Quantum Algorithms from Fast Classical Transforms. K.F. and J.N. are supported by Northeastern University.

\appendix

\bibliographystyle{JHEP}
\bibliography{main}

\end{document}